\newtheorem{theorem}{Theorem}
\newtheorem{lemma}{Lemma}
\newtheorem{definition}{Definition}
\newtheorem{example}{Example}
\newtheorem{remark}{Remark}
\begin{document}

\title{Minimizing the Complexity of Fast Sphere Decoding of STBCs}

\author{
\authorblockN{G. R. Jithamithra and B. Sundar Rajan,\\}
\authorblockA{Dept of ECE, Indian Institute of Science, \\
Bangalore 560012, India\\
Email:\{jithamithra,bsrajan\}@ece.iisc.ernet.in\\
}
}
%\author{G. R. Jithamithra and B. Sundar Rajan, Senior Member, IEEE}
% make the title area
\maketitle

%%%%%%%%%%%%%%%%%%%%%%%%%%%%%%%%%%%%%%%%%%%%%
% Abstract
%%%%%%%%%%%%%%%%%%%%%%%%%%%%%%%%%%%%%%%%%%%%%

\begin{abstract}
Decoding of linear space-time block codes (STBCs) with sphere-decoding (SD) is well known. A fast-version of the SD known as fast sphere decoding (FSD) has been recently studied by Biglieri, Hong and Viterbo. Viewing a linear STBC as a vector space spanned by its defining weight matrices over the real number field, we define a quadratic form (QF), called the Hurwitz-Radon QF (HRQF), on this vector space and give a QF interpretation of the FSD complexity of a linear STBC. It is shown that the FSD complexity is only a function of the weight matrices defining the code and their ordering, and not of the channel realization (even though the equivalent channel when SD is used depends on the channel realization) or the number of receive antennas. It is also shown that the FSD complexity is completely captured into a single matrix obtained from the HRQF. Moreover, for a given set of weight matrices, an algorithm to obtain a best ordering of them leading to the least FSD complexity is presented. The well known classes of low FSD complexity codes (multi-group decodable codes, fast decodable codes and fast group decodable codes) are presented in the framework of HRQF.
\end{abstract}

%%%%%%%%%%%%%%%%%%%%%%%%%%%%%%%%%%%%%%%%%%%%%
% Section 1 - Introduction 
%%%%%%%%%%%%%%%%%%%%%%%%%%%%%%%%%%%%%%%%%%%%%

\section{Introduction \& Preliminaries}
\label{sec1}
Consider a minimal-delay space-time coded Rayleigh quasi-static flat fading MIMO channel with full channel state information at the receiver (CSIR). The input output relation for such a system is given by
\begin{equation}
\label{system_model}
\textbf{Y} = \textbf{H}\textbf{X} + \textbf{N},
\end{equation}
where $\textbf{H} \in \mathbb{C}^{n_{r} \times n_{t}}$ is the channel matrix and $\textbf{N} \in \mathbb{C}^{n_{r} \times n_{t}}$ is the additive noise. Both $\textbf{H}$ and $\textbf{N}$ have entries that are i.i.d. complex-Gaussian with zero mean and variance 1 and $N_{0}$ respectively. The transmitted codeword is $\textbf{X} \in \mathbb{C}^{n_{t} \times n_{t}}$ and $\textbf{Y} \in \mathbb{C}^{n_{r} \times n_{t}}$ is the received matrix. The ML decoding metric to minimize over all possible values of the codeword $\textbf{X},$ is
\begin{equation}
\label{ML}
\textbf{M}\left( \textbf{X}\right) = \parallel \textbf{Y} - \textbf{H}\textbf{X}\parallel_{F}^{2}.
\end{equation}

%%%%%%%%%%%%%%%%%%
\begin{definition}
\label{ld_stbc_def}
A linear STBC \cite{HaH}: A linear STBC $\mathcal{C}$ over a real (1-dimensional) signal set $\mathcal{S}$, is a finite set of $n_{t} \times n_t$ matrices, where any codeword matrix belonging to the code $\mathcal{C}$ is obtained  from,
\begin{equation}
\label{ld_stbc}
\textbf{X}\left( x_{1}, x_{2}, . . . , x_{K} \right) ~=~ \sum_{i = 1}^{K} x_{i}\textbf{A}_{i}, 
\end{equation}
by letting the real variables $x_{1}, x_{2} . . . , x_{K}$ take values from a real signal set $\mathcal{S},$ where $\textbf{A}_{i}$ are fixed $n_{t} \times n_t$ complex matrices defining the code, known as the weight matrices. The rate of this code is $\frac{K}{2n_t}$ complex symbols per channel use.
\end{definition}
%%%%%%%%%%%%%%%%%%%%

We are interested in linear STBCs, since they admit Sphere Decoding (SD) \cite{ViB} which is a fast way of decoding for the variables. A further simplified version of the SD known as the fast sphere decoding (FSD) \cite{BHV} (also known as conditional ML decoding) was studied by Biglieri, Hong and Viterbo. The quadratic form (QF) approach  has been used in the context of STBCs in \cite{UnM} to determine whether Quaternion algebras or Biquaternion algebras are division algebras, an aspect dealing with the full diversity of the codes. This approach has not been fully exploited to study the other characteristics of  STBCs. In this paper, we use this approach to study the fast sphere decoding (FSD) complexity of STBCs (a formal definition of this complexity is given in Subsection \ref{FSDC}).
  
Designing STBCs with low decoding complexity has been studied widely in the literature. Orthogonal designs with single symbol decodability were proposed in \cite{TJC}, \cite{Li}, \cite{TiH}. For STBCs with more than two transmit antennas, these came at a cost of reduced transmission rates. To increase the rate at the cost of higher decoding complexity, multi-group decodable STBCs were introduced in \cite{DYT}, \cite{KaR}, \cite{KaR1}. 
Fast decodable codes (codes that admit FSD) have reduced SD complexity owing to the fact that a few of the variables can be decoded as single symbols or in groups if we condition them with respect to the other variables. Fast decodable codes for asymmetric systems using division algebras have been recently reported \cite{VHO}. Golden code and Silver code are also examples of fast decodable codes as shown in \cite{PaR} and \cite{SiB}. The properties of fast decodable codes and multi-group decodable codes were combined and a new class of codes called fast group decodable codes were studied in \cite{RGYS}.

%%%--------------------------------------------------
\subsection{Hurwitz-Radon Quadratic Form}
\label{HRQF}
In this subsection we define the Hurwitz Radon quadratic form (HRQF) on any STBC. We first recall some basics about quadratic forms. More details can be seen in  \cite{La}.

\begin{definition}
\label{quad_form_def} 
Let $F$ be a field with characteristic not 2, and $V$ be a finite dimensional $F$-vector space. A quadratic form on $V$ is defined as a map $Q:V \longrightarrow F$ such that it satisfies the following properties.
\begin{itemize}
\item $Q\left(a \textbf{v}\right) = a^{2}Q\left( \textbf{v}\right)$ for all $ \textbf{v} \in V$ and all $a \in F$. \item The map $B\left( \textbf{v}, \textbf{w}\right) = \frac{1}{2}\left[ Q\left( \textbf{v} + \textbf{w}\right) - Q\left( \textbf{v}\right) - Q\left( \textbf{w}\right)\right] $ for all $ \textbf{v}, \textbf{w} \in V$ is bilinear and symmetric. \end{itemize} 
\end{definition} 

If we consider $V$ as an $n$-dimensional vector space over $F$, then we can also consider the quadratic form as a homogeneous polynomial of degree two, i.e., for $1\leq i,j \leq n$, we have scalars $m_{ij}$ such that
\begin{equation}
\label{qf_matrix_def}
Q\left(\textbf{v}\right) = Q\left(v_{1}, v_{2}, ..., v_{n}\right) = \sum_{i,j=1}^{n} m_{ij}v_{i}v_{j}
\end{equation}
for all $\textbf{v} = \left[v_{1}, ..., v_{n}\right] \in V$. Hence, we can associate a matrix $ \textbf{M} = \left(m_{ij}\right) $ with the quadratic form such that $Q\left(\textbf{v}\right) = \textbf{v} \textbf{M} \textbf{v}^{T}.$
%%%%%%%%%%%%%%%%%%%%
\begin{definition}
\label{hrqf_def}
The Hurwitz Radon quadratic form is a map from the STBC $ \mathcal{C} = \left\lbrace \textbf{X} = \sum_{i = 1}^{K} x_{i}\textbf{A}_{i} \right\rbrace  $ to the field of real numbers $ \mathbb{R}$, i.e., $Q: \mathcal{C} \longrightarrow \mathbb{R}$ given by
\begin{equation}
\label{hrqf_eq}
Q\left( \textbf{X}\right) = \sum_{1\leq i \leq j \leq K} x_{i} x_{j} d_{ij},
\end{equation}
where $ \textbf{X}$ is an element of the STBC and $$d_{ij} = \parallel \textbf{A}_{i} \textbf{A}_{j}^{H} + \textbf{A}_{j} \textbf{A}_{i}^{H}\parallel_{F}^{2}.$$
\end{definition}
%%%%%%%%%%%%%%%%%%%%%%%%%

\begin{theorem}
\label{hrqf_is_qf_thm}
The map defined by \eqref{hrqf_eq} is a quadratic form.
\end{theorem}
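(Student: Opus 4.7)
The plan is to verify the two defining properties of Definition \ref{quad_form_def} directly, using the fact that once we fix an ordering of the weight matrices $\textbf{A}_1,\dots,\textbf{A}_K$, the STBC $\mathcal{C}$ is a $K$-dimensional real vector space (with a basis given by the $\textbf{A}_i$, assumed linearly independent over $\mathbb{R}$) and the map $Q$ in \eqref{hrqf_eq} is, by construction, a homogeneous polynomial of degree two in the coordinates $x_1,\dots,x_K$ with real coefficients $d_{ij}$. So the task reduces to checking the two axioms coordinate-wise.

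First I would verify homogeneity of degree two. For $a\in\mathbb{R}$ and $\textbf{X}=\sum_i x_i \textbf{A}_i\in\mathcal{C}$, the scalar multiple is $a\textbf{X}=\sum_i (ax_i)\textbf{A}_i$, so substituting into \eqref{hrqf_eq} and pulling $a^2$ out of each monomial $(ax_i)(ax_j)$ gives $Q(a\textbf{X})=a^2 Q(\textbf{X})$. This is immediate from the fact that every monomial in \eqref{hrqf_eq} has total degree two.

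Next I would compute the associated polar form. Writing $\textbf{X}=\sum_i x_i\textbf{A}_i$ and $\textbf{W}=\sum_i w_i\textbf{A}_i$, expansion of
\[
Q(\textbf{X}+\textbf{W}) \;=\; \sum_{1\le i\le j\le K}(x_i+w_i)(x_j+w_j)\,d_{ij}
\]
yields $Q(\textbf{X})+Q(\textbf{W})+\sum_{i\le j}(x_i w_j+w_i x_j)\,d_{ij}$, so that
\[
B(\textbf{X},\textbf{W}) \;=\; \tfrac{1}{2}\sum_{1\le i\le j\le K}(x_i w_j+w_i x_j)\,d_{ij}.
\]
Symmetry $B(\textbf{X},\textbf{W})=B(\textbf{W},\textbf{X})$ is visible from the $(x_iw_j+w_ix_j)$ factor, and bilinearity in each argument follows because, with one argument held fixed, $B$ is a real-linear combination of the coordinates of the other. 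Since the coordinate map $\textbf{X}\mapsto (x_1,\dots,x_K)$ is itself $\mathbb{R}$-linear (using the linear independence of the $\textbf{A}_i$), this lifts to bilinearity of $B$ as a map $\mathcal{C}\times\mathcal{C}\to\mathbb{R}$.

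There is no real obstacle here; the only points that require a moment of care are that the coefficients $d_{ij}=\|\textbf{A}_i\textbf{A}_j^H+\textbf{A}_j\textbf{A}_i^H\|_F^2$ are real (so $Q$ genuinely takes values in $\mathbb{R}$, as required since we are working over $F=\mathbb{R}$, which has characteristic not $2$), and that the $x_i$ are well-defined coordinates on $\mathcal{C}$. With these observations, both axioms of Definition \ref{quad_form_def} hold, establishing that $Q$ is a quadratic form on $\mathcal{C}$ and identifying the symmetric matrix $\textbf{M}=(m_{ij})$ of \eqref{qf_matrix_def} as $m_{ii}=d_{ii}$ and $m_{ij}=m_{ji}=\tfrac{1}{2}d_{ij}$ for $i<j$, which is the matrix that will be used in the sequel to capture the FSD complexity.
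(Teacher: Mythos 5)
Your proof is correct and follows essentially the same route as the paper: both verify degree-two homogeneity by pulling $a^2$ out of each monomial and then expand $Q(\textbf{X}+\textbf{W})$ to exhibit the polar form $B$ explicitly as a symmetric expression that is linear in each set of coordinates. The only additions you make are housekeeping the paper leaves implicit (real-valuedness of the $d_{ij}$, linear independence of the $\textbf{A}_i$ so the coordinates are well defined, and the factor $\tfrac{1}{2}$ needed on the off-diagonal entries for $\textbf{x}\textbf{M}\textbf{x}^{T}$ to match \eqref{hrqf_eq} exactly), none of which changes the argument.
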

\begin{proof}
The map $Q$ needs to satisfy the conditions as defined in Definition \ref{quad_form_def}. We have
\begin{equation*}
\label{hrqf_thm_eq1}
Q\left(a \textbf{X}\right) = \sum_{i,j} ax_{i}.ax_{j}.d_{ij} = a^{2}\sum_{i,j} x_{i}.x_{j}.d_{ij} = a^{2}Q\left(\textbf{X}\right)
\end{equation*}
and
\begin{equation*}
\label{hrqf_thm_eq2}
B\left( \textbf{X}, \textbf{Y}\right) = \frac{1}{2}\left[ Q\left( \textbf{X} + \textbf{Y}\right) - Q\left( \textbf{X}\right) - Q\left( \textbf{Y}\right)\right]
\end{equation*}
should be bilinear and symmetric where $ \textbf{X} = \sum_{i = 1}^{K} x_{i}\textbf{A}_{i} $ and $ \textbf{Y} = \sum_{i = 1}^{K}\left( y_{i}\textbf{A}_{i} \right)$. Substituting and simplifying, we get
\begin{equation*}
\label{hrqf_thm_eq3}
B\left( \textbf{X}, \textbf{Y}\right) = \frac{1}{2} \sum_{i,j} \left[ x_{i}y_{i}d_{ii} + \left(x_{i}y_{j} + x_{j}y_{i} \right)d_{ij} \right].
\end{equation*}
It is clearly seen that this map is bilinear and symmetric.
\end{proof}
%%%%

We can associate a matrix with the HRQF. If we define the matrix $ \textbf{M} = \left(m_{ij}\right) $ where $i, j = 1, 2, ..., K$ such that $m_{ij} = d_{ij}$, then we can write the HRQF as $Q\left(\textbf{x}\right) = \textbf{x} \textbf{M} \textbf{x}^{T},$ where $ \textbf{x} = \left[x_{1} ~ x_{2} ~ ... ~ x_{K}\right] $. Notice that $ \textbf{M}$ is a symmetric matrix and $m_{ij} = 0$ if and only if $\textbf{A}_{i} \textbf{A}_{j}^{H} + \textbf{A}_{j} \textbf{A}_{i}^{H} = \textbf{0}$.

The following example shows that the FSD complexity depends on the ordering of the weight matrices or equivalently the ordering of the variables. 
%%%%%%%%%%%%%%%%%%%%%%%%%%%%%%%%%%%%%%%%%%%%%%%
\begin{example}
\label{silver_code}
Let us consider the Silver code given by:
\begin{equation}
\label{silver_code_eqn}
\textbf{X} = \textbf{X}_{a}\left(s_{1}, s_{2}\right) + \textbf{T}\textbf{X}_{b}\left(z_{1}, z_{2}\right),
\end{equation}
where $\textbf{X}_{a}$ and $\textbf{X}_{b}$ take the Alamouti structure, and
$\textbf{X}_{a}\left(s_{1}, s_{2}\right) = \left[\begin{array}{rr}
s_{1} & -s_{2}^{*}\\
s_{2} & s_{1}^{*}\\
\end{array}\right],$
 $\textbf{X}_{b}\left(z_{1}, z_{2}\right) = \left[\begin{array}{rr}
z_{1} & -z_{2}^{*}\\
z_{2} & z_{1}^{*}\\
\end{array}\right],$ 
$\textbf{T} = \left[\begin{array}{cc}
1 & 0\\
0 & -1\\
\end{array}\right]$ 
and $\left[ z_{1}, z_{2}\right]^{T} = \textbf{U}\left[ s_{3}, s_{4}\right]^{T},$ where $\textbf{U}$ is a unitary matrix chosen to maximize the minimum determinant and is given by 
$\textbf{U} = \frac{1}{\sqrt{7}}\left[\begin{array}{cc}
1+j & -1+2j\\
1+2j & 1-j\\
\end{array}\right].$

Let all the variables take values from a signal set of cardinality $M$. If we order the variables (and hence the weight matrices) as  $\left[s_{1I} , s_{1Q} , s_{2I} , s_{2Q} , s_{3I} , s_{3Q} , s_{4I} , s_{4Q} \right]$, then the $ \textbf{R}$ matrix for SD has the following structure

{\scriptsize
\begin{equation*}
\textbf{R} = \left[\begin{array}{cccccccc}
t & 0 & 0 & 0 & t & t & t & t\\
0 & t & 0 & 0 & t & t & t & t\\
0 & 0 & t & 0 & t & t & t & t\\
0 & 0 & 0 & t & t & t & t & t\\
0 & 0 & 0 & 0 & t & 0 & 0 & 0\\
0 & 0 & 0 & 0 & 0 & t & 0 & 0\\
0 & 0 & 0 & 0 & 0 & 0 & t & 0\\
0 & 0 & 0 & 0 & 0 & 0 & 0 & t\\
\end{array}\right],
\end{equation*}
}

\noindent
where $t$ denotes non zero entries. We can clearly see that the Silver code admits fast decoding with this ordering with FSD complexity $M^{5}$. However, if we change the ordering to $ \left[s_{1I} , s_{1Q} , s_{4I} , s_{2Q} , s_{3I} , s_{3Q} , s_{2I} , s_{4Q} \right]$, then the $ \textbf{R}$ matrix for SD has the following structure 

{\scriptsize
\begin{equation*}
\textbf{R} = \left[\begin{array}{cccccccc}
t & 0 & t & 0 & t & t & 0 & t\\
0 & t & t & 0 & t & t & 0 & t\\
0 & 0 & t & t & t & t & t & 0\\
0 & 0 & 0 & t & t & t & t & t\\
0 & 0 & 0 & 0 & t & t & t & t\\
0 & 0 & 0 & 0 & 0 & t & t & t\\
0 & 0 & 0 & 0 & 0 & 0 & t & t\\
0 & 0 & 0 & 0 & 0 & 0 & 0 & t\\
\end{array}\right],
\end{equation*}
}

\noindent
where $t$ denotes non zero entries. With this ordering, the FSD complexity increases to $M^{7}$. 
\end{example}
%%%%%%%%%%%%%%%%%%%%%%%%%%%%%%%%%%%%%%%%%%%%%%%%%%%%%%%%%%%%%%%%%%%%
%%%%%%%%%%%%%%%%%%%%%%%%%%%%%%%%%%%%%%%%%%%%%%%%%%%%%%%%%%%%%%%%%%%%

The contributions of this paper are as follows: 
\begin{itemize}
\item We give a formal definition of the FSD complexity of a linear STBC (Subsection \ref{FSDC}.)
\item With the help of HRQF, it is shown that the FSD complexity of the code depends only on the weight matrices of the code with their ordering, and not on the channel realization (even though the equivalent channel when SD is used depends on the channel realization) or the number of receive antennas. 
\item A best ordering (not necessarily unique) of the weight matrices provides the least FSD complexity for the STBC. We provide an algorithm to be applied to the HRQF matrix which outputs a best ordering.
\end{itemize}

%%%%%%%%%%%%%%%%%%%%%%%%%%%%
The remaining of the paper is organized as follows: In Section \ref{sec2} the known classes of low ML decodable codes, the system model and the formal definition of the FSD complexity of a linear STBC are given. In Section \ref{sec3}, we  show that the FSD complexity depends completely on the HRQF and not on the channel realization or the number of receive antennas. In Section \ref{sec4}, we present an algorithm to modify the HRQF matrix in order to obtain a best ordering of the weight matrices to obtain the least FSD complexity. Concluding remarks constitute Section \ref{sec5}. 

\indent \textit{Notations:} Throughout the paper, bold lower-case letters are used to denote vectors and bold upper-case letters to denote matrices. For a complex variable $x$, $x_{I}$ and $x_{Q}$ denote the real and imaginary part of $x$, respectively. The sets of all integers, all real and complex numbers are denoted by $\mathbb{Z}, \mathbb{R}$ and $\mathbb{C}$, respectively. The operation of stacking the columns of $\textbf{X}$ one below the other is denoted by $vec\left(\textbf{X}\right)$. The Kronecker product is denoted by $\otimes$, $\textbf{I}_{T}$ and $\textbf{O}_{T}$ denote the $T \times T$ identity matrix and the null matrix, respectively. For a complex variable $x$, the $\check{\left(\centerdot\right)}$ operator acting on $x$ is defined as follows
\begin{equation*}
\check{x} \triangleq \left[\begin{array}{rr}
x_{I} & -x_{Q}\\
x_{Q} & x_{I}\\
\end{array}\right].
\end{equation*}
The $\check{\left(\centerdot\right)}$ operator can similarly be applied to any matrix $\textbf{X} \in \mathbb{C}^{n \times m}$ by replacing each entry $x_{ij}$ by $\check{x}_{ij}$ , $i = 1, 2, ~\cdots , n, j = 1, 2, ~\cdots , m,$ resulting in a matrix denoted by $\check{\textbf{X}} \in \mathbb{R}^{2n \times 2m}$. Given a complex vector $\textbf{x} = \left[x_{1}, x_{2}, \cdots , x_{n}\right]^{T},$ $\tilde{\textbf{x}}$ is defined as
$\tilde{\textbf{x}} \triangleq \left[x_{1I},x_{1Q},\cdots ,x_{nI},x_{nQ}\right]^{T}.$

%%% Section 2 %%%%%%%%%%%%%%%%%%%%%%%%%%%%%%%%%%%%%%%%%%%%%%%%%%%%%%%%%%%%%%%%%%%%%%%%%%%%%%

\section{System Model and Definition of FSD complexity}
\label{sec2}

For any Linear STBC with variables $x_{1}, x_{2} . . . , x_{K}$ given by (\ref{ld_stbc}), the generator matrix $\textbf{G}$ \cite{BHV} is defined by $\widetilde{vec\left(\textbf{X}\right)} = \textbf{G} \tilde{\textbf{x}},$ 
where $\tilde{\textbf{x}} = \left[x_{1}, x_{2} . . . , x_{K}\right]^{T}$. In terms of the weight matrices, the generator matrix can be written as
\begin{equation*}
\textbf{G} = \left[\widetilde{vec\left(\textbf{A}_{1}\right)} ~ \widetilde{vec\left(\textbf{A}_{2}\right)} ~ \cdots ~ \widetilde{vec\left(\textbf{A}_{K}\right)} ~ \right].
\end{equation*}
Hence, for any STBC,  \eqref{system_model} can be written as
\begin{equation*}
\widetilde{vec\left(\textbf{Y}\right)} = \textbf{H}_{eq}\tilde{\textbf{x}} + \widetilde{vec\left(\textbf{N}\right)},
\end{equation*}
where $\textbf{H}_{eq} \in \mathbb{R}^{2n_{r}n_{t} \times K}$ is given by
$\textbf{H}_{eq} = \left(\textbf{I}_{n_{t}} \otimes \check{\textbf{H}}\right) \textbf{G},$ 
 and 
$\tilde{\textbf{x}} = \left[x_{1}, x_{2} . . . , x_{K}\right],$ 
with each $x_{i}$ drawn from a 1-dimensional (PAM) constellation. Using the above equivalent system model, the ML decoding metric \eqref{ML} can be written as
\begin{equation*}
\textbf{M}\left(\tilde{\textbf{x}}\right) = \parallel \widetilde{vec\left(\textbf{Y}\right)} - \textbf{H}_{eq}\tilde{\textbf{x}}\parallel_{F}^{2}.
\end{equation*}
Using $\textbf{Q}\textbf{R}$ decomposition of $\textbf{H}_{eq}$, we get $\textbf{H}_{eq} = \textbf{Q}\textbf{R}$ where $\textbf{Q} \in \mathbb{R}^{2n_{r}n_{t} \times K}$ is an orthonormal matrix and $ \textbf{R} \in \mathbb{R}^{K \times K}$ is an upper triangular matrix. Using this, the ML decoding metric now changes to 
\begin{equation}
\label{eq_ml_decoding_metric}
\textbf{M}\left(\tilde{\textbf{x}}\right) = \parallel \textbf{Q}^{T}\widetilde{vec\left(\textbf{Y}\right)} - \textbf{R}\tilde{\textbf{x}}\parallel_{F}^{2} = \parallel \textbf{y}^{'} - \textbf{R}\tilde{\textbf{x}}\parallel_{F}^{2}.
\end{equation}
If we have $\textbf{H}_{eq} = \left[ \textbf{h}_{1} \textbf{h}_{2} ..., \textbf{h}_{K}\right] ,$ where $ \textbf{h}_{i}, i \in 1, 2, ... , K$ are column vectors, then the $ \textbf{Q}$ and $ \textbf{R}$ matrices have the following form obtained by the Gram-Schmidt orthogonalization: 
\begin{equation}
\label{q_mat}
\textbf{Q} = \left[ \textbf{q}_{1} ~ \textbf{q}_{2} ~ ... ~ \textbf{q}_{K}\right] ,
\end{equation}
where $ \textbf{q}_{i}, i \in 1, 2, ... , K$ are column vectors, and
\begin{equation}
\label{r_mat_def}
\textbf{R} = \left[\begin{array}{ccccc}
\parallel \textbf{r}_{1} \parallel & \left\langle \textbf{q}_{1}, \textbf{h}_{2}\right\rangle & \left\langle \textbf{q}_{1}, \textbf{h}_{3}\right\rangle & \cdots & \left\langle \textbf{q}_{1}, \textbf{h}_{K}\right\rangle\\
0 & \parallel \textbf{r}_{2} \parallel & \left\langle \textbf{q}_{2}, \textbf{h}_{3}\right\rangle & \cdots & \left\langle \textbf{q}_{2}, \textbf{h}_{K}\right\rangle\\
0 & 0 & \parallel \textbf{r}_{3} \parallel & \cdots & \left\langle \textbf{q}_{3}, \textbf{h}_{K}\right\rangle\\
\vdots & \vdots & \vdots & \ddots & \vdots\\
0 & 0 & 0 & \cdots & \parallel \textbf{r}_{K} \parallel\\
\end{array}\right],
\end{equation}
where $\textbf{r}_{1} = \textbf{h}_{1},~~~~ \textbf{q}_{1} = \frac{\textbf{r}_{1}}{\parallel \textbf{r}_{1} \parallel}$ and for $i = 2, ... K,$
\begin{equation*}
\label{r_mat_entries2}
\textbf{r}_{i} = \textbf{h}_{i} - \sum_{j=1}^{i-1} \left\langle \textbf{q}_{j}, \textbf{h}_{i}\right\rangle \textbf{q}_{j} , ~~~~\textbf{q}_{i} = \frac{\textbf{r}_{i}}{\parallel \textbf{r}_{i} \parallel}.
\end{equation*}
%%%%%%%%%%%%%%%%%%%%%%%%%%%%%%%%%%%%%%%%%%%%%%%%%

\subsection{Multi-group decodability, fast decodability and fast group decodability}

In case of a multi-group decodable STBC, the variables can be partitioned into groups such that the ML decoding metric is decoupled into submetrics such that only the members of the same group need to be decoded jointly. It can be formally defined as \cite{KaR}, \cite{KhR}, \cite{RaR}:

\begin{definition}
\label{multi_group_decodability}
An STBC is said to be $g$-group decodable if there exists a partition of $\left\lbrace 1, 2, ... , K\right\rbrace $ into $g$ non-empty subsets $\Gamma_{1}, \Gamma_{2}, ... , \Gamma_{g}$ such that the following condition is satisfied:
\begin{equation*}
\label{multi_group_dec_cond}
\textbf{A}_{l}\textbf{A}_{m}^{H} + \textbf{A}_{m}\textbf{A}_{l}^{H} = \textbf{0},
\end{equation*}
whenever $l \in \Gamma_{i}$ and $m \in \Gamma_{j}$ and $i \neq j$. 
\end{definition}
If we group all the variables of the same group together in \eqref{eq_ml_decoding_metric}, then the $ \textbf{R}$ matrix for the SD \cite{ViB}, \cite{DCB} in case of multi-group decodable codes will be of the following form:
\begin{equation}
\label{multi_group_r_mat}
\textbf{R} = \left[\begin{array}{cccc}
\Delta_{1} & \textbf{0} & \cdots & \textbf{0}\\
\textbf{0} & \Delta_{2} & \cdots & \textbf{0}\\
\vdots & \vdots & \ddots & \vdots\\
\textbf{0} & \textbf{0} & \cdots & \Delta_{g}\\
\end{array}\right],
\end{equation}
where $\Delta_{i}, i = 1, 2, ..., g$ is a square upper triangular matrix. 

Now, consider the standard SD of an STBC. Suppose the $ \textbf{R}$ matrix as defined in \eqref{r_mat_def} turns out to be such that when we fix values for a set of symbols, the rest of the symbols become group decodable, then the code is said to be fast decodable. Formally, it is defined as follows:
\begin{definition}
\label{fast_decodability}
An STBC is said to be fast SD if there exists a partition of $\left\lbrace 1, 2, ... , L\right\rbrace $ where $L \leq K$ into $g$ non-empty subsets $\Gamma_{1}, \Gamma_{2}, ... , \Gamma_{g}$ such that the following condition is satisfied
\begin{equation}
\label{fast_decode_eq}
\left\langle \textbf{q}_{i}, \textbf{h}_{j} \right\rangle = 0 \left( i < j\right) ,
\end{equation}
whenever $i \in \Gamma_{p}$ and $j \in \Gamma_{q}$ and $p \neq q$ where $ \textbf{q}_{i}$ and $ \textbf{h}_{j}$ are obtained from the $\textbf{Q}\textbf{R}$ decomposition of the equivalent channel matrix $\textbf{H}_{eq} = \left[ \textbf{h}_{1} \textbf{h}_{2} ..., \textbf{h}_{K}\right] = \textbf{Q}\textbf{R}$ with $ \textbf{h}_{i}, i \in 1, 2, ... , K$ as column vectors and $\textbf{Q} = \left[ \textbf{q}_{1} ~ \textbf{q}_{2} ~ ... ~ \textbf{q}_{K}\right]$ with $ \textbf{q}_{i}, i \in 1, 2, ... , K$ as column vectors as defined in \eqref{q_mat}.   
\end{definition}

Hence, by conditioning $K - L$ variables, the code becomes $g$-group decodable. As a special case, when no conditioning is needed, i.e., $L = K$, then the code is $g$-group decodable. The $ \textbf{R}$ matrix for fast decodable codes will have the following form:
\begin{equation}
\label{fast_decodable_r_mat}
\textbf{R} = \left[\begin{array}{cc}
\Delta & \textbf{B}_{1}\\
\textbf{0} & \textbf{B}_{2}\\
\end{array}\right],
\end{equation}
where $\Delta$ is an $L \times L$ block diagonal, upper triangular matrix, $ \textbf{B}_{2}$ is a square upper triangular matrix and $ \textbf{B}_{1}$ is a rectangular matrix.

Fast group decodable codes were introduced in \cite{RGYS}. These codes combine the properties of multi-group decodable codes and the fast decodable codes. These codes allow each of the groups in the multi-group decodable codes to be fast decoded. 
The $ \textbf{R}$ matrix for a fast group decodable code will have the following form:
\begin{equation}
\label{fast_group_decodable_r_mat}
\textbf{R} = \left[\begin{array}{cccc}
\textbf{R}_{1} & \textbf{0} & \cdots & \textbf{0}\\
\textbf{0} & \textbf{R}_{2} & \cdots & \textbf{0}\\
\vdots & \vdots & \ddots & \vdots\\
\textbf{0} & \textbf{0} & \cdots & \textbf{R}_{g}\\
\end{array}\right],
\end{equation}
where each $ \textbf{R}_{i}, i = 1, 2, ..., g$ will have the following form:
\begin{equation}
\label{fast_group_decodable_ri_mat}
\textbf{R}_{i} = \left[\begin{array}{cc}
\Delta_{i} & \textbf{B}_{i_{1}}\\
\textbf{0} & \textbf{B}_{i_{2}}\\
\end{array}\right],
\end{equation}
where $\Delta_{i}$ is an $L_{i} \times L_{i}$ block diagonal, upper triangular matrix, $ \textbf{B}_{i_{2}}$ is a square upper triangular matrix and $ \textbf{B}_{i_{1}}$ is a rectangular matrix.
%The formal definition is given as follows:
%\begin{definition}
%\label{fast_group_decodability}
%An STBC with weight matrices $ \textbf{A}_{1}, ... , \textbf{A}_{K}$ is said to be fast group decodable if it satisfies the following conditions:
%\begin{itemize}
%\item There exists a partition of $\left\lbrace 1, 2, ... , K\right\rbrace $ into $g$ non empty subsets $\Gamma_{1}, \Gamma_{2}, ... , \Gamma_{g}$ such that $\textbf{A}_{l}\textbf{A}_{m}^{H} + \textbf{A}_{m}\textbf{A}_{l}^{H} = \textbf{0}$ whenever $l \in \Gamma_{i}$ and $m \in \Gamma_{j}$ and $i \neq j$. 
%\item In any of the partition $\Gamma_{i}$ with cardinality $K_{i}$, if there exists a partition of $\left\lbrace 1, 2, ... , L_{i}\right\rbrace $ where $L_{i} \leq K_{i}$ into $g_{i}$ non empty subsets $\Upsilon_{i_{1}}, \Upsilon_{i_{2}}, ... , \Upsilon_{i_{g_{i}}}$ such that we have $\left\langle \textbf{q}_{i_{l_{1}}}, \textbf{h}_{i_{l_{2}}} \right\rangle = 0 \left( i_{l_{1}} < i_{l_{2}}\right)$ whenever $i_{l_{1}} \in \Upsilon_{i_{p}}$ and $j \in \Upsilon_{i_{q}}$ and $p \neq q$ where $i = 1, 2, ..., g$. 
%\end{itemize}
%\end{definition}
%%%%%%%%%%%%%%%%%%%%%%%%%%%%%%%%%%%%%%%%%%%%%%%%
The structure of the $ \textbf{R}$ matrix for each of the codes defined above depends upon the ordering of the weight matrices. If we change the ordering of the weight matrices, the $ \textbf{R}$ matrix may lose its structure and no longer exhibit the desirable decoding properties. The Silver code of Example \ref{silver_code} illustrates this aspect. In the following subsection, we give a formal definition of the FSD complexity.
%%%%%%%%%%%%%%%%?????? %%%
\subsection{FSD Complexity of an STBC}
\label{FSDC}
In this section we define the FSD complexity of an STBC. First we consider a single group decodable case and define the FSD complexity for a particular ordering. We then extend the definition to a multi-group decodable STBC.
Let $\mathcal{C}$ be an STBC with the weight matrices $ \textbf{A}_{1}, ... , \textbf{A}_{K},$ where all the variables take values from a signal set of cardinality $M$.
Let $ \textbf{R}$ be the matrix obtained by the $ \textbf{Q} \textbf{R}$ decomposition used for SD. Let the ordering of the weight matrices used to obtain the $\textbf{R}$ matrix be $ \textbf{A}_{1}, ... , \textbf{A}_{K}$.
Denote by $l_{1}$ the number of variables that need to be conditioned, when we use FSD on the $\textbf{R}$ matrix.
After conditioning $l_{1}$ variables, let the rest of the variables be $p_{1}$-group decodable.
In case the code is not fast sphere decodable, then we set $l_{1}$ to the total number of variables in the matrix.
Denote by $\textbf{R}_{i_{1}}$ the submatrix containing the variables of the $i_{1}$-th group after conditioning $l_{1}$ variables in the $\textbf{R}$ matrix and by $n_{i_{1}}$ the number of variables in the $i_{1}$-th group, where $1 \leq i_{1} \leq p_{1}$.
We can use FSD on each of the $\textbf{R}_{i_{1}}$ matrices now. Let us denote by $l_{1, i_{1}}$ the number of variables that need to be conditioned, when we use FSD on the $\textbf{R}_{i_{1}}$ matrix. After conditioning $l_{1, i_{1}}$ variables in $\textbf{R}_{i_{1}}$, let the rest of the variables be $p_{1, i_{1}}$-group decodable.
Let $p_{2} = \max_{i_{1}}\left( p_{1, i_{1}} \right)$.
Denote by $\textbf{R}_{i_{1}, i_{2}}$ the submatrix containing the variables of the $i_{2}$-th group after conditioning $l_{1, i_{1}}$ variables in the $\textbf{R}_{i_{1}}$ matrix and by $n_{i_{1}, i_{2}}$ the number of variables in the $i_{2}$-th group, where $1 \leq i_{2} \leq p_{2}$. It may so happen that $p_{1, i_{1}} < p_{2}$ for some $i_{1}$. In such cases we set $n_{i_{1}, i_{2}} = 0$ for $i_{2} > p_{1, i_{1}}$. We continue this process till we cannot apply FSD any more. This process terminates since there are a finite number of variables and the number of variables are decreasing with each iteration of the FSD. Let the process stop after $\lambda$ such iterations. In the $\left( \lambda - 1\right)$ -th iteration, we have the following:
Let us denote by $l_{1, i_{1}, i_{2}, ..., i_{\lambda - 1}}$ the number of variables that need to be conditioned, when we use FSD on the $\textbf{R}_{i_{1}, i_{2}, ..., i_{\lambda - 1}}$ matrix. After conditioning $l_{1, i_{1}, i_{2}, ..., i_{\lambda - 1}}$ variables in $\textbf{R}_{i_{1}, i_{2}, ..., i_{\lambda - 1}}$, let the rest of the variables be $p_{1, i_{1}, i_{2}, ..., i_{\lambda - 1}}$-group decodable.
Let $p_{\lambda} = \max_{i_{1}, i_{2}, ..., i_{\lambda - 1}}\left( p_{1, i_{1}, i_{2}, ..., i_{\lambda - 1}} \right)$.
Denote by $\textbf{R}_{i_{1}, i_{2}, ..., i_{\lambda}}$ the submatrix containing the variables of the $i_{\lambda}$-th group after conditioning $l_{1, i_{1}, i_{2}, ..., i_{\lambda - 1}}$ variables in the $\textbf{R}_{i_{1}, i_{2}, ..., i_{\lambda - 1}}$ matrix and by $n_{i_{1}, i_{2}, ..., i_{\lambda}}$ the number of variables in the $i_{\lambda}$-th group where $1 \leq i_{\lambda} \leq p_{\lambda}$.
On the last iteration, we have $l_{1, i_{1}, i_{2}, ..., i_{\lambda}}$ = $n_{i_{1}, i_{2}, ..., i_{\lambda}}$ for $1 \leq i_{\lambda} \leq p_{\lambda}$.

Let $k_{1, i_{1}, i_{2}, ..., i_{j-1}} = \max_{i_{j}}\left( l_{1, i_{1}, i_{2}, ..., i_{j}} + k_{1, i_{1}, i_{2}, ..., i_{j}}\right)$ for $2 \leq j \leq \lambda$, $1 \leq i_{j} \leq p_{j}$ and $k_{1, i_{1}, i_{2}, ..., i_{\lambda}} = 0$ and $k_{1} = \max_{i_{1}}\left(k_{1, i_{1}}\right) $.
\begin{definition}
\label{fsd_complexity_def1}
We define the FSD complexity of a single group decodable STBC for the given ordering to be $M^{l_{1} + k_{1}}$.
\end{definition}

In case of a multi-group decodable code with $g$ groups, the $\textbf{R}$ matrix will be a block diagonal matrix. We then calculate the FSD complexity of each group independently as described above and choose the maximum among them as the FSD complexity of the STBC.
\begin{definition}
\label{fsd_complexity_def2}
We define the FSD complexity of a multi-group decodable STBC with $g$ groups to be $\max_{i} \left( M^{l_{i} + k_{i}} \right),$ where $1\leq i \leq g$.
\end{definition}

We present a few examples to get a better understanding of FSD complexity.
\begin{example}
\label{fsd_example1}
Let the $\textbf{R}$ matrix be of the form: 
{\scriptsize
\begin{equation*}
\textbf{R} = \left[\begin{array}{cccccccc}
a_{11} & a_{12} & 0 & 0 & a_{15} & a_{16} & a_{17} & a_{18}\\
0 & a_{22} & 0 & 0 & a_{25} & a_{26} & a_{27} & a_{28}\\
0 & 0 & a_{33} & a_{34} & a_{35} & a_{36} & a_{37} & a_{38}\\
0 & 0 & 0 & a_{44} & a_{45} & a_{46} & a_{47} & a_{48}\\
0 & 0 & 0 & 0 & a_{55} & 0 & 0 & 0\\
0 & 0 & 0 & 0 & 0 & a_{66} & 0 & 0\\
0 & 0 & 0 & 0 & 0 & 0 & a_{77} & 0\\
0 & 0 & 0 & 0 & 0 & 0 & 0 & a_{88}\\
\end{array}\right].
\end{equation*}
}
Now if we use FSD on this matrix, we will condition $l_{1} = 4$ variables and obtain a $2$-group decodable code. We have  
\begin{equation*}
\textbf{R}_{1} = \left[\begin{array}{cc}
a_{11} & a_{12}\\
0 & a_{22}\\
\end{array}\right], ~~~~
\textbf{R}_{2} = \left[\begin{array}{cc}
a_{33} & a_{34}\\
0 & a_{44}\\
\end{array}\right].
\end{equation*}
Since the number of variables in each of the above matrices are $2$, we have $n_{1} = n_{2} = 2$.
We cannot condition any more variables in either $\textbf{R}_{1}$ or $\textbf{R}_{2}$. So the process stops here and we set $l_{1, 1} = 2$ and $l_{1, 2} = 2$. 
We have
\begin{equation*}
k_{1,1} = k_{1,2} = 0,\\
\end{equation*}
\begin{equation*}
k_{1} = \max\left(l_{1,1} + k_{1,1} , l_{1,2} + k_{1,2}\right) = 2. \\
\end{equation*}
The FSD complexity of this STBC for the given ordering is $M^{l_{1} + k_{1}} = M^{6}$.
\end{example}

\begin{example}
\label{fsd_example2}
Let the $\textbf{R}$ matrix be of the form: 
{\scriptsize
\begin{equation*}
%\textbf{R} = \left[\begin{array}{cccccccccc}
%a_{1,1} & 0 & a_{1,3} & a_{1,4} & 0 & 0 & 0 & 0 & a_{1,9} & a_{1,10}\\
%0 & a_{2,2} & a_{2,3} & a_{2,4} & 0 & 0 & 0 & 0 & a_{2,9} & a_{2,10}\\
%0 & 0 & a_{3,3} & a_{3,4} & 0 & 0 & 0 & 0 & a_{3,9} & a_{3,10}\\
%0 & 0 & 0 & a_{4,4} & 0 & 0 & 0 & 0 & a_{4,9} & a_{4,10}\\
%0 & 0 & 0 & 0 & a_{5,5} & 0 & 0 & a_{5,8} & a_{5,9} & a_{5,10}\\
%0 & 0 & 0 & 0 & 0 & a_{6,6} & 0 & a_{6,8} & a_{6,9} & a_{6,10}\\
%0 & 0 & 0 & 0 & 0 & 0 & a_{7,7} & a_{7,8} & a_{7,9} & a_{7,10}\\
%0 & 0 & 0 & 0 & 0 & 0 & 0 & a_{8,8} & a_{8,9} & a_{8,10}\\
%0 & 0 & 0 & 0 & 0 & 0 & 0 & 0 & a_{9,9} & a_{9,10}\\
%0 & 0 & 0 & 0 & 0 & 0 & 0 & 0 & 0 & a_{10,10}\\
%\end{array}\right]
\textbf{R} = \left[\begin{array}{cccccccccc}
a_{1,1} & 0 & a_{1,3} & a_{1,4} & 0 & 0 & 0 & 0 & t & t\\
0 & a_{2,2} & a_{2,3} & a_{2,4} & 0 & 0 & 0 & 0 & t & t\\
0 & 0 & a_{3,3} & a_{3,4} & 0 & 0 & 0 & 0 & t & t\\
0 & 0 & 0 & a_{4,4} & 0 & 0 & 0 & 0 & t & t\\
0 & 0 & 0 & 0 & a_{5,5} & 0 & 0 & a_{5,8} & t & t\\
0 & 0 & 0 & 0 & 0 & a_{6,6} & 0 & a_{6,8} & t & t\\
0 & 0 & 0 & 0 & 0 & 0 & a_{7,7} & a_{7,8} & t & t\\
0 & 0 & 0 & 0 & 0 & 0 & 0 & a_{8,8} & t & t\\
0 & 0 & 0 & 0 & 0 & 0 & 0 & 0 & t & t\\
0 & 0 & 0 & 0 & 0 & 0 & 0 & 0 & 0 & t\\
\end{array}\right].
\end{equation*}
}
Now if we use FSD on this matrix, we will condition $l_{1} = 2$ variables and obtain a $2$-group decodable code. We have 
\begin{equation*}
\textbf{R}_{1} = \left[\begin{array}{cccc}
a_{1,1} & 0 & a_{1,3} & a_{1,4}\\
0 & a_{2,2} & a_{2,3} & a_{2,4}\\
0 & 0 & a_{3,3} & a_{3,4}\\
0 & 0 & 0 & a_{4,4}\\
\end{array}\right], 
\end{equation*}
\begin{equation*}
\textbf{R}_{2} = \left[\begin{array}{cccc}
a_{5,5} & 0 & 0 & a_{5,8}\\
0 & a_{6,6} & 0 & a_{6,8}\\
0 & 0 & a_{7,7} & a_{7,8}\\
0 & 0 & 0 & a_{8,8}\\
\end{array}\right].
\end{equation*}
Since the number of variables in each of the above matrices are $4$, we have $n_{1} = n_{2} = 4$.
Now if we use FSD on $\textbf{R}_{1}$, we can condition $l_{1, 1} = 2$ variables and obtain a $2$-group decodable code. If we use FSD on $\textbf{R}_{2}$, we can condition $l_{1, 2} = 1$ variable and obtain a $3$-group decodable code. We now have
\begin{equation*}
\textbf{R}_{1,1} = \left[\begin{array}{c}
a_{1,1}\\
\end{array}\right], ~~~~
\textbf{R}_{1,2} = \left[\begin{array}{c}
a_{2,2}\\
\end{array}\right], ~~~~
\end{equation*}
\begin{equation*}
\textbf{R}_{2,1} = \left[\begin{array}{c}
a_{5,5}\\
\end{array}\right], ~~~~
\textbf{R}_{2,2} = \left[\begin{array}{c}
a_{6,6}\\
\end{array}\right], ~~~~
\textbf{R}_{2,3} = \left[\begin{array}{c}
a_{7,7}\\
\end{array}\right],
\end{equation*}
\begin{equation*}
n_{1,1} = n_{1,2} = n_{2,1} = n_{2,2} = n_{2,3} = 1 ~~~ and ~~~ n_{1,3} =  0.
\end{equation*}
We cannot condition any more variables in any of these matrices. So the process stops here and we set 
\begin{equation*}
l_{1,1,1} = l_{1,1,2} = l_{1,2,1} = l_{1,2,2} = l_{1,2,3} = 1.
\end{equation*}
We have 
\begin{equation*}
k_{1,1,1} = k_{1,1,2} = k_{1,2,1} = k_{1,2,2} = k_{1,2,3} = 0,
\end{equation*}
\begin{equation*}
k_{1,1} = \max\left(l_{1,1,1} + k_{1,1,1} , l_{1,1,2} + k_{1,1,2}\right) = 1, \\
\end{equation*}
\begin{equation*}
k_{1,2} = \max\left(l_{1,2,1} + k_{1,2,1} , l_{1,2,2} + k_{1,2,2}, l_{1,2,3} + k_{1,2,3}\right) = 1, \\
\end{equation*}
\begin{equation*}
k_{1} = \max\left(l_{1,1} + k_{1,1} , l_{1,2} + k_{1,2}\right) = 3 .\\
\end{equation*}

The FSD complexity of this STBC for the given ordering is $M^{l_{1} + k_{1}} = M^{5}$.
\end{example}
%%%%%%%%%%%%%
%\begin{definition}
%\label{FSD_def}
%We define the FSD complexity of the STBC $\mathcal{C}$ to be the minimum FSD complexity over all possible orderings of the weight matrices of $\mathcal{C}$.
%\end{definition}
%%%%%%%%%%%

\section{HRQF and FSD complexity}
\label{sec3}

In this section we show that the HRQF matrix is enough to determine the FSD  complexity of an STBC and hence the FSD complexity is independent of the channel matrix realization or the number of receive antennas. Towards this end, we prove that the zeros in the $\textbf{R}$ matrix which determine the FSD complexity are also zeros in the HRQF matrix. First we define an ordered partition of a set. 
%%%%%%%%%%
%\begin{definition}
%\label{ordered_partition_def}
%We call a partition of $\left\lbrace 1, 2, ... , K\right\rbrace $ into $g$ non empty subsets $\Gamma_{1}, \Gamma_{2}, ... , \Gamma_{g}$ an ordered partition if $\left\lbrace 1, ... , \vert \Gamma_{1} \vert\right\rbrace \in \Gamma_{1}$, $\left\lbrace \vert \Gamma_{1} \vert + 1, ... , \vert \Gamma_{1} \vert + \vert \Gamma_{2} \vert\right\rbrace \in \Gamma_{2}$ and so on till $\left\lbrace \sum_{i=1}^{g-1}\vert \Gamma_{i} \vert + 1, ... , \sum_{i=1}^{g}\vert \Gamma_{i} \vert \right\rbrace \in \Gamma_{g}.$
%\end{definition}
%%%%%%%%%%%%%
\begin{definition}
\label{ordered_partition_def}
We call a partition of $\left\lbrace a_{1}, a_{2}, ... , a_{K}\right\rbrace $ into $g$ non-empty subsets $\Gamma_{1}, \Gamma_{2}, ... , \Gamma_{g}$ with cardinalities $K_{1}, K_{2}, ... , K_{g}$ an ordered partition if $\left\lbrace a_{1}, ... , a_{K_{1}}\right\rbrace \in \Gamma_{1}$, $\left\lbrace a_{K_{1} + 1}, ... , a_{K_{1} + K_{2}} \right\rbrace \in \Gamma_{2}$ and so on, till $\left\lbrace a_{\sum_{i=1}^{g-1}K_{i} + 1}, ... , a_{\sum_{i=1}^{g}K_{i}} \right\rbrace \in \Gamma_{g}.$
\end{definition}

%%%%%%%%%%%%
Now we address the class of multi-group decodable codes.

%\subsubsection{HRQF for Multi Group Decodable Codes}
%\label{subsubsec3_3_1}

\begin{lemma}
\label{hrqf_multi_group_lemma}
Consider an STBC $ \mathcal{C} = \sum_{i = 1}^{K} x_{i}\textbf{A}_{i} $. Let $\textbf{M}$ denote the HRQF matrix of this  STBC. If there exists an ordered partition of $\left\lbrace 1, 2, ... , K\right\rbrace $ into $g$ non-empty subsets $\Gamma_{1}, \Gamma_{2}, ... , \Gamma_{g}$ such that $m_{ij} = 0$ whenever $i \in \Gamma_{p}$ and $j \in \Gamma_{q}$ and $p \neq q$, then the code is $g$-group sphere decodable. In other words, the FSD complexity of the STBC is determined by the HRQF matrix. 
\end{lemma}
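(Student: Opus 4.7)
The plan is to reduce the HRQF hypothesis on $\textbf{M}$ to the anticommutator condition in Definition \ref{multi_group_decodability}, and then invoke the block diagonal structure of $\textbf{R}$ recorded in \eqref{multi_group_r_mat}. The substance of the lemma is not the combinatorics (which is almost a tautology), but the claim that the resulting block structure holds \emph{for every} channel realization $\textbf{H}$ and every $n_r$, so the argument has to be carried out uniformly in $\textbf{H}$.

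First I would translate the hypothesis. Since $m_{ij} = d_{ij} = \parallel \textbf{A}_i \textbf{A}_j^H + \textbf{A}_j \textbf{A}_i^H \parallel_F^2$ and the Frobenius norm vanishes only on the zero matrix, the equivalence $m_{ij} = 0 \iff \textbf{A}_i \textbf{A}_j^H + \textbf{A}_j \textbf{A}_i^H = \textbf{0}$ (noted immediately after $\textbf{M}$ was introduced) turns the assumed ordered partition into the condition of Definition \ref{multi_group_decodability} verbatim: whenever $i \in \Gamma_p$, $j \in \Gamma_q$ with $p \neq q$, the weight matrices $\textbf{A}_i$ and $\textbf{A}_j$ anticommute in the Hermitian sense.

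Next I would promote this pairwise relation between the weight matrices to orthogonality between the corresponding columns $\textbf{h}_i$ of $\textbf{H}_{eq} = \left(\textbf{I}_{n_t} \otimes \check{\textbf{H}}\right)\textbf{G}$. Writing $\textbf{h}_i = \left(\textbf{I}_{n_t} \otimes \check{\textbf{H}}\right)\widetilde{vec\left(\textbf{A}_i\right)}$ and using the real-representation identities satisfied by $\check{(\cdot)}$, a short computation gives $\textbf{h}_i^T \textbf{h}_j = \tfrac{1}{2}\,\Re\,\text{tr}\left(\textbf{H}^H \textbf{H}\left(\textbf{A}_i \textbf{A}_j^H + \textbf{A}_j \textbf{A}_i^H\right)\right)$, which vanishes identically in $\textbf{H}$ and in $n_r$ whenever the anticommutator is zero. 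Because the partition is \emph{ordered}, the columns of $\textbf{H}_{eq}$ then fall into $g$ consecutive mutually orthogonal blocks, so the Gram-Schmidt procedure producing $\textbf{Q}$ and $\textbf{R}$ processes each block independently and $\textbf{R}$ acquires the shape \eqref{multi_group_r_mat}. The ML metric \eqref{eq_ml_decoding_metric} therefore decouples across the $g$ groups, which is precisely $g$-group sphere decodability.

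The main obstacle is the middle step, namely the inner-product identity above. All the surrounding moves are essentially bookkeeping, but that identity is where the channel $\textbf{H}$ and the receive dimension $n_r$ must disappear from the conclusion. Once it is verified by carefully tracking the $\widetilde{(\cdot)}$, $\check{(\cdot)}$ and $vec(\cdot)$ conventions of Section \ref{sec2}, the rest of the lemma follows immediately, and the same identity explains why the FSD complexity of multi-group decodable codes is captured entirely by the HRQF matrix.
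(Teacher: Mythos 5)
Your proposal is correct and follows essentially the same route as the paper: translate $m_{ij}=0$ into the anticommutator condition $\textbf{A}_i\textbf{A}_j^H+\textbf{A}_j\textbf{A}_i^H=\textbf{0}$, deduce orthogonality of the corresponding columns of $\textbf{H}_{eq}$ uniformly in $\textbf{H}$ and $n_r$, and conclude that the ordered partition forces the block-diagonal $\textbf{R}$ of \eqref{multi_group_r_mat}. The only differences are cosmetic: the paper cites \cite{PaR} for the inner-product identity you propose to derive (your identity $\langle\textbf{h}_i,\textbf{h}_j\rangle=\tfrac{1}{2}\Re\,\mathrm{tr}\bigl(\textbf{H}^H\textbf{H}\bigl(\textbf{A}_i\textbf{A}_j^H+\textbf{A}_j\textbf{A}_i^H\bigr)\bigr)$ is the right one), and where you appeal to Gram--Schmidt processing the orthogonal blocks independently, the paper verifies the vanishing of the entries $\langle\textbf{q}_i,\textbf{h}_j\rangle$ by an explicit induction --- a step you should still spell out (e.g.\ via $\textbf{q}_i\in\mathrm{span}\{\textbf{h}_1,\dots,\textbf{h}_i\}$) in a full write-up.
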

%%%%
\begin{proof}
Let $\textbf{R}$ be the matrix obtained from the QR decomposition of $\textbf{H}_{eq}$. For the code to be $g$-group sphere decodable, we need to prove that $r_{ij} = 0$, whenever $i \in \Gamma_{p}$ and $j \in \Gamma_{q}$ and $p \neq q$.
We know from \cite{PaR} that if $\textbf{A}_{i} \textbf{A}_{j}^{H} + \textbf{A}_{j} \textbf{A}_{i}^{H} = \textbf{0}$ is satisfied for some $i,j$ then the corresponding columns in the $ \textbf{H}_{eq}$ matrix are orthogonal, i.e., $\langle \textbf{h}_{i}, \textbf{h}_{j} \rangle = 0$. We also know that $m_{ij} = 0$ if and only if $\textbf{A}_{i} \textbf{A}_{j}^{H} + \textbf{A}_{j} \textbf{A}_{i}^{H} = \textbf{0}$. 
Let $L_{p} = \sum_{q=1}^{p} \vert \Gamma_{q} \vert$ where $p = 1, 2, ..., g$ and $L_{0} = 0$.

For any group $ \Gamma_{p}$, we need to prove that $r_{ij} = 0$ for $L_{p-1} + 1 \leq i \leq L_{p}$ and $L_{p} + 1 \leq j \leq K$.
Consider the first group $\Gamma_{1}$. We have $m_{ij} = 0$  for $1 \leq i \leq L_{1}$ and $L_{1} + 1 \leq j \leq K$. We need to prove that the $ \textbf{R}$ matrix has zero entries at the same locations. The proof for this is by induction.\\
For $i=1$ and for any $j \geq L_{1}+1$, 
\begin{equation*}
\langle \textbf{q}_{1}, \textbf{h}_{j} \rangle = \frac{1}{\parallel \textbf{h}_{1}\parallel}\langle \textbf{h}_{1}, \textbf{h}_{j} \rangle = 0
\end{equation*}
since $\textbf{q}_{1} = \frac{1}{\parallel \textbf{h}_{1}\parallel}\textbf{h}_{1}$. Now, let $\langle \textbf{q}_{l}, \textbf{h}_{j} \rangle = 0$ for all $l < i$ for any $i$ such that $1 \leq i \leq L_{1}$. We have, 
\begin{align*}
\langle \textbf{q}_{i}, \textbf{h}_{j} \rangle &= \frac{1}{\parallel \textbf{r}_{i}\parallel}\left[ \langle \textbf{h}_{i} - \sum_{l=1}^{i-1} \langle \textbf{q}_{l}, \textbf{h}_{i} \rangle \textbf{q}_{l}, \textbf{h}_{j} \rangle \right]\\
%\end{equation*}
%\begin{equation*}
&= \frac{1}{\parallel \textbf{r}_{i}\parallel}\left[ \langle \textbf{h}_{i}, \textbf{h}_{j} \rangle - \sum_{l=1}^{i-1} \langle \textbf{q}_{l}, \textbf{h}_{i}\rangle \langle \textbf{q}_{l}, \textbf{h}_{j} \rangle \right] = 0,
\end{align*}
since $\langle \textbf{h}_{i}, \textbf{h}_{j} \rangle = 0$ as $m_{ij}=0$ and $\langle \textbf{q}_{l}, \textbf{h}_{j} \rangle=0$ for $l < i$, by induction hypothesis.  

Now consider the $p$-th group $\Gamma_{p}$. Let the induction hypothesis be true for all groups $1, 2,... p-1$. Consider $ r_{ij}$ where $L_{p-1} + 1 \leq i \leq L_{p}$ and $L_{p} + 1 \leq j \leq K$. We have,
\begin{align*}
r_{ij} &= \langle \textbf{q}_{i}, \textbf{h}_{j} \rangle = \frac{1}{\parallel \textbf{r}_{i}\parallel}\left[ \langle \textbf{h}_{i} - \sum_{l=1}^{i-1} \langle \textbf{q}_{l}, \textbf{h}_{i} \rangle \textbf{q}_{l}, \textbf{h}_{j} \rangle \right]  \\
%\end{equation*}
%\begin{equation*}
&= \frac{1}{\parallel \textbf{r}_{i}\parallel}\left[ \langle \textbf{h}_{i}, \textbf{h}_{j} \rangle - \sum_{l=1}^{i-1} \langle \textbf{q}_{l}, \textbf{h}_{i}\rangle \langle \textbf{q}_{l}, \textbf{h}_{j} \rangle \right] = 0 , 
\end{align*}
since $\langle \textbf{h}_{i}, \textbf{h}_{j} \rangle = 0$ as $m_{ij}=0$ and $\langle \textbf{q}_{l}, \textbf{h}_{j} \rangle=0$ for $l < i$ by the induction hypothesis.
\end{proof}

We now consider an example to illustrate the above lemma. 

\begin{example}
\label{hrqf_multi_group_ex}
Consider the $2 \times 2$ ABBA code given by \cite{TBH}:
\begin{equation*}
\textbf{X} = \left[\begin{array}{cc}
x_{1} + jx_{4} & -x_{2} + jx_{3}\\
-x_{2} + jx_{3} & x_{1} + jx_{4} \\
\end{array}\right],
\end{equation*}
where $x_{i} \in \mathbb{R}$ for $i = 1, 2, 3, 4$. This is a two group decodable code with $\left\lbrace x_{1}, x_{2}\right\rbrace $ belonging to one group and $\left\lbrace x_{3}, x_{4}\right\rbrace $ belonging to the other. The structure of the HRQF matrix $ \textbf{M}$ and the $ \textbf{R}$ matrix are given below with $\left[ x_{1}, x_{2}, x_{3}, x_{4}\right] $ as the ordering of the variables and the weight matrices,
\begin{equation*}
\textbf{M} = \left[\begin{array}{cccc}
t & t & 0 & 0\\
t & t & 0 & 0\\
0 & 0 & t & t\\
0 & 0 & t & t\\
\end{array}\right], ~~~~
\textbf{R} = \left[\begin{array}{cccc}
t & t & 0 & 0\\
0 & t & 0 & 0\\
0 & 0 & t & t\\
0 & 0 & 0 & t\\
\end{array}\right],
\end{equation*}
where $t$ denotes the non-zero entries. As it can be seen, the upper triangular portion of $ \textbf{M}$ matrix and the $ \textbf{R}$ matrix have the same structure. 
\end{example}

Now we move on the class of fast decodable codes. 
\begin{lemma}
\label{hrqf_fast_decode_lemma}
Consider an STBC $ \mathcal{C} = \sum_{i = 1}^{K} x_{i}\textbf{A}_{i} $. Let $\textbf{M}$ denote the HRQF matrix of this STBC. If there exists an ordered partition of $\left\lbrace 1, 2, ... , L\right\rbrace $ where $L \leq K$ into $g$ non-empty subsets $\Gamma_{1}, \Gamma_{2}, ... , \Gamma_{g}$ such that $m_{ij} = 0$ whenever $i \in \Gamma_{p}$ and $j \in \Gamma_{q}$ and $p \neq q$, then the code is fast decodable or conditionally $g$-group decodable. 
\end{lemma}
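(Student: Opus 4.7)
The plan is to mirror the proof of Lemma~\ref{hrqf_multi_group_lemma} almost verbatim, but restrict all claims to indices in $\{1,2,\ldots,L\}$. What we need to establish is the fast-decodability condition of Definition~\ref{fast_decodability}: for every $i<j$ with $i\in\Gamma_p$, $j\in\Gamma_q$, and $p\neq q$ (both indices in $\{1,\dots,L\}$), the Gram--Schmidt quantity $\langle \textbf{q}_i,\textbf{h}_j\rangle$ vanishes. The two facts that carry over unchanged from the previous lemma are: (i) $m_{ij}=0$ iff $\textbf{A}_i\textbf{A}_j^H+\textbf{A}_j\textbf{A}_i^H=\textbf{0}$, and (ii) under this condition the corresponding columns of $\textbf{H}_{eq}$ are orthogonal, i.e.\ $\langle \textbf{h}_i,\textbf{h}_j\rangle=0$ (this is the fact borrowed from \cite{PaR} used in Lemma~\ref{hrqf_multi_group_lemma}).

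With $L_p=\sum_{q=1}^{p}|\Gamma_q|$ and $L_0=0$, I would prove the statement by induction on the group index $p$ exactly as before, but stop the outer induction at $p=g$ (so the indices $L+1,\ldots,K$ are never involved on the left-hand side of $\langle \textbf{q}_i,\textbf{h}_j\rangle$). The base case $p=1$, $i=1$ is immediate from $\textbf{q}_1=\textbf{h}_1/\|\textbf{h}_1\|$, which gives $\langle \textbf{q}_1,\textbf{h}_j\rangle=0$ for every $j>1$ with $m_{1j}=0$. For the inductive step, fix $i$ with $L_{p-1}+1\le i\le L_p\le L$ and $j$ with $L_p+1\le j\le L$ (so $j\in\Gamma_q$ for some $q>p$), and expand
\begin{equation*}
\langle \textbf{q}_i,\textbf{h}_j\rangle=\frac{1}{\|\textbf{r}_i\|}\left[\langle \textbf{h}_i,\textbf{h}_j\rangle-\sum_{l=1}^{i-1}\langle \textbf{q}_l,\textbf{h}_i\rangle\langle \textbf{q}_l,\textbf{h}_j\rangle\right].
\end{equation*}
The first term vanishes because $m_{ij}=0$. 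For each $l<i$ in the sum, $l$ either lies in the same group $\Gamma_p$ as $i$ or in some earlier group $\Gamma_{p'}$ with $p'<p$. In the former case, $l$ is still in a different group from $j$, so $\langle \textbf{q}_l,\textbf{h}_j\rangle=0$ by the inductive hypothesis applied at step $l$; in the latter case, $\langle \textbf{q}_l,\textbf{h}_i\rangle=0$ by the inductive hypothesis applied across the groups $\Gamma_{p'}$ and $\Gamma_p$. Either way every summand is zero, so $r_{ij}=\langle \textbf{q}_i,\textbf{h}_j\rangle=0$.

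There is no genuine obstacle here: the only point to check carefully is that the Gram--Schmidt recursion for $\textbf{q}_i$ with $i\le L$ only looks at $\textbf{h}_1,\ldots,\textbf{h}_i$, all of which are among the first $L$ columns, so the assumption that $m_{ij}=0$ holds only for pairs inside $\{1,\ldots,L\}$ is enough to run the induction to completion. The indices $L+1,\ldots,K$ (the variables one conditions on during FSD) are allowed to produce arbitrary nonzero entries $r_{ij}$, which exactly matches the block structure in \eqref{fast_decodable_r_mat}, where $\textbf{B}_1$ is unconstrained. Thus the conclusion that $\mathcal{C}$ is conditionally $g$-group decodable follows, completing the proof.
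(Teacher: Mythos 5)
Your proof is correct and follows essentially the same route as the paper, which simply states that the result follows from the proof of Lemma~\ref{hrqf_multi_group_lemma} with $K$ replaced by $L$; you have merely spelled out the induction that the paper leaves implicit. The case split on whether $l$ lies in $\Gamma_p$ or an earlier group is a slightly more careful bookkeeping of the same induction hypothesis the paper invokes, and your observation that the Gram--Schmidt recursion for $\textbf{q}_i$ with $i\le L$ never touches columns beyond the first $L$ is exactly the point that makes the truncation to $L$ harmless.
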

\begin{proof}
The proof follows from the proof of Lemma \ref{hrqf_multi_group_lemma} by replacing $K$ with $L$.
\end{proof}
%%%%

We now consider an example to illustrate the above lemma. 
\begin{example}
\label{hrqf_fast_decode_ex}
Consider the Silver code as mentioned in Example \ref{silver_code}. If we order the variables (and hence the weight matrices) in the following fashion $ \left[s_{1I} , s_{1Q} , s_{2I} , s_{2Q} , s_{3I} , s_{3Q} , s_{4I} , s_{4Q} \right]$, then the HRQF matrix $ \textbf{M}$ and the $ \textbf{R}$ matrix will have the following structure: 

\begin{equation*}
\textbf{M} = \left[\begin{array}{cccccccc}
t & 0 & 0 & 0 & t & t & t & t\\
0 & t & 0 & 0 & t & t & t & t\\
0 & 0 & t & 0 & t & t & t & t\\
0 & 0 & 0 & t & t & t & t & t\\
t & t & t & t & t & 0 & 0 & 0\\
t & t & t & t & 0 & t & 0 & 0\\
t & t & t & t & 0 & 0 & t & 0\\
t & t & t & t & 0 & 0 & 0 & t\\
\end{array}\right], 
\end{equation*}
\begin{equation*}
\textbf{R} = \left[\begin{array}{cccccccc}
t & 0 & 0 & 0 & t & t & t & t\\
0 & t & 0 & 0 & t & t & t & t\\
0 & 0 & t & 0 & t & t & t & t\\
0 & 0 & 0 & t & t & t & t & t\\
0 & 0 & 0 & 0 & t & 0 & 0 & 0\\
0 & 0 & 0 & 0 & 0 & t & 0 & 0\\
0 & 0 & 0 & 0 & 0 & 0 & t & 0\\
0 & 0 & 0 & 0 & 0 & 0 & 0 & t\\
\end{array}\right],
\end{equation*}
where $t$ denotes the non-zero entries. As it can be seen, the upper triangular portion of the matrix $ \textbf{M}$, has a structure that admits fast decodability which is conditionally $4$-group decodable if considered as the $ \textbf{R}$ matrix.
\end{example}

We now turn to the class of fast group decodable codes. 
%%%%%%%%%%
%\begin{lemma}
%\label{hrqf_fast_group_decode_lemma}
%Consider an STBC $ \mathcal{C} = \sum_{i = 1}^{K} x_{i}\textbf{A}_{i} $. Let $\textbf{M}$ denote the HRQF matrix of this STBC. If there exists an ordered partition of $\left\lbrace 1, 2, ... , K\right\rbrace $ where $L \leq K$ into $g$ non empty subsets $\Gamma_{1}, \Gamma_{2}, ... , \Gamma_{g}$ such that $m_{ij} = 0$ whenever $i \in \Gamma_{p}$ and $j \in \Gamma_{q}$ and $p \neq q$, and each group $\Gamma_{i}$ admits fast decodability i.e., there exists a partition of $\left\lbrace 1, 2, ... , L_{i}\right\rbrace $ where $L_{i} \leq K_{i}$, $K_{i} = \vert \Gamma_{i} \vert$ into $g_{i}$ non empty subsets $\Upsilon_{i_{1}}, \Upsilon_{i_{2}}, ... , \Upsilon_{i_{g_{i}}}$ such that $m_{i_{l_{1}}i_{l_{2}}} = 0$ whenever $i_{l_{1}} \in \Upsilon_{i_{p}}$ and $j \in \Upsilon_{i_{q}}$ and $p \neq q$, $i = 1, 2, ..., g$, then the code is said to be fast group decodable. 
%\end{lemma}
%%%%%%%%%%%%%%%%%%%
\begin{lemma}
\label{hrqf_fast_group_decode_lemma}
Consider an STBC $ \mathcal{C} = \sum_{i = 1}^{K} x_{i}\textbf{A}_{i} $. Let $\textbf{M}$ denote the HRQF matrix of this STBC. If there exists an ordered partition of $\left\lbrace 1, 2, ... , K\right\rbrace $ into $g$ non-empty subsets $\Gamma_{1}, \Gamma_{2}, ... , \Gamma_{g}$ with cardinalities $K_{1}, K_{2}, ... , K_{g}$ such that $m_{ij} = 0$ whenever $i \in \Gamma_{p}$ and $j \in \Gamma_{q}$ and $p \neq q$, and if any group $\Gamma_{i}$ admits fast decodability, i.e., there exists an ordered partition of $\left\lbrace \sum_{l=1}^{i-1}K_{l} + 1, \sum_{l=1}^{i-1}K_{l} + 2, ... , \sum_{l=1}^{i-1}K_{l} + L_{i}\right\rbrace $ where $L_{i} \leq K_{i}$, into $g_{i}$ non-empty subsets $\Upsilon_{i_{1}}, \Upsilon_{i_{2}}, ... , \Upsilon_{i_{g_{i}}}$ such that $m_{rs} = 0$ whenever $r \in \Upsilon_{i_{p}}$ and $s \in \Upsilon_{i_{q}}$ and $p \neq q$, $i = 1, 2, ..., g$, then the code is fast group decodable. 
\end{lemma}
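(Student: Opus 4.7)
The plan is to combine the two preceding lemmas: Lemma \ref{hrqf_multi_group_lemma} (multi-group decodability) to obtain the outer block-diagonal structure in \eqref{fast_group_decodable_r_mat}, followed by Lemma \ref{hrqf_fast_decode_lemma} (fast decodability) applied inside each diagonal block to obtain the inner form \eqref{fast_group_decodable_ri_mat}. Concretely, I will first invoke Lemma \ref{hrqf_multi_group_lemma} on the outer ordered partition $\Gamma_1,\ldots,\Gamma_g$: by hypothesis $m_{ij}=0$ whenever $i$ and $j$ lie in different $\Gamma_p$'s, so the lemma forces $r_{ij}=0$ for such pairs, yielding a block-diagonal $\textbf{R}$ with diagonal blocks $\textbf{R}_1,\ldots,\textbf{R}_g$ of sizes $K_i\times K_i$.

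The key auxiliary step is to identify each $\textbf{R}_i$ as the QR $R$-factor of the sub-channel matrix $[\textbf{h}_j]_{j\in\Gamma_i}$ obtained from the restricted STBC $\sum_{j\in\Gamma_i} x_j\textbf{A}_j$. This is already implicit in the induction in the proof of Lemma \ref{hrqf_multi_group_lemma}: for $j\in\Gamma_q$ one has $\langle \textbf{q}_l,\textbf{h}_j\rangle=0$ for every $l\in\Gamma_p$ with $p<q$, so the Gram--Schmidt update
\begin{equation*}
\textbf{r}_j=\textbf{h}_j-\sum_{l=1}^{j-1}\langle \textbf{q}_l,\textbf{h}_j\rangle \textbf{q}_l
\end{equation*}
collapses to a sum only over indices $l\in\Gamma_q$ with $l<j$. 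Thus the $q_j$'s with $j\in\Gamma_q$ are produced purely from the $h_l$'s with $l\in\Gamma_q$, and the entries of $\textbf{R}_i$ depend only on the weight matrices $\{\textbf{A}_j\}_{j\in\Gamma_i}$ (and the channel), exactly as if one had run QR on that restricted code in isolation.

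Having made this identification, I apply Lemma \ref{hrqf_fast_decode_lemma} inside each $\Gamma_i$: the hypothesis gives an ordered partition $\Upsilon_{i_1},\ldots,\Upsilon_{i_{g_i}}$ of the first $L_i\le K_i$ indices of $\Gamma_i$ with the HRQF cross-terms vanishing. Since the HRQF matrix for the restricted code is just the principal submatrix of $\textbf{M}$ on rows/columns indexed by $\Gamma_i$, Lemma \ref{hrqf_fast_decode_lemma} yields that $\textbf{R}_i$ has the fast-decodable shape \eqref{fast_decodable_r_mat}, i.e.\ precisely \eqref{fast_group_decodable_ri_mat} with $\Delta_i$ block-diagonal upper triangular of size $L_i\times L_i$, $\textbf{B}_{i_2}$ square upper triangular, and $\textbf{B}_{i_1}$ rectangular. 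Assembling these blocks on the diagonal of $\textbf{R}$ matches \eqref{fast_group_decodable_r_mat}, which is exactly the definition of fast group decodability.

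The step most likely to require care is the ``decoupling'' observation in the second paragraph, namely that restricting Gram--Schmidt to a block $\Gamma_i$ produces the same $\textbf{R}_i$ that one would get by running QR on the isolated sub-code. Once this is stated cleanly (it follows from the induction already carried out in Lemma \ref{hrqf_multi_group_lemma}), the outer and inner lemmas can be invoked as black boxes and the conclusion is immediate; no new computation with the weight matrices is needed.
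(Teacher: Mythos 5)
Your proposal is correct and takes essentially the same route as the paper, whose entire proof is the one-line remark that the result follows from Lemmas \ref{hrqf_multi_group_lemma} and \ref{hrqf_fast_decode_lemma}. You in fact supply the one detail the paper leaves implicit --- that the Gram--Schmidt recursion decouples across the blocks $\Gamma_{i}$, so each diagonal block $\textbf{R}_{i}$ coincides with the $\textbf{R}$-factor of the restricted sub-code and Lemma \ref{hrqf_fast_decode_lemma} may legitimately be applied to it.
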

%%%%%%%%%%%%%%%%%%%%
\begin{proof}
The proof follows from the proofs of lemmas \ref{hrqf_multi_group_lemma} and \ref{hrqf_fast_decode_lemma}.
\end{proof}
We now consider an example to illustrate the above lemma. 
\begin{example}
\label{hrqf_fast_group_decode_ex}
Consider the fast group decodable STBC \cite{RGYS} given in \eqref{fgd_stbc} . 
\begin{figure*}
\scriptsize 
\begin{equation}
\label{fgd_stbc}
\textbf{X} = \left[\begin{array}{cccc}
s_{1} + js_{2} + js_{15} + js_{16} + js_{17} & s_{7} + js_{8} + s_{13} + js_{14} & s_{3} + js_{4} + s_{11} + js_{12} & -s_{5} - js_{6} + s_{9} + js_{10}\\
-s_{7} + js_{8} - s_{13} + js_{14} & s_{1} + js_{2} + js_{15} - js_{16} - js_{17} & s_{5} - js_{6} + s_{9} - js_{10} & s_{3} - js_{4} - s_{11} + js_{12}\\
-s_{3} + js_{4} - s_{11} + js_{12} & -s_{5} - js_{6} - s_{9} - js_{10} & s_{1} - js_{2} + js_{15} - js_{16} + js_{17} & s_{7} - js_{8} - s_{13} + js_{14}\\
s_{5} - js_{6} - s_{9} + js_{10} & -s_{3} - js_{4} + s_{11} + js_{12} & -s_{7} - js_{8} + s_{13} + js_{14} & s_{1} - js_{2} + js_{15} + js_{16} - js_{17}\\
\end{array}\right]
\end{equation}
\hrule
\end{figure*} 

Let the ordering of the variables (and hence the weight matrices) be $\left[ s_{1}, s_{2}, ... , s_{17}\right] $. This STBC is two group decodable with $s_{1}$ in one group and $\left\lbrace s_{2}, s_{3}, ..., s_{17}\right\rbrace $ in the other. The second group is conditionally five group decodable. The HRQF matrix $ \textbf{M}$ and the $ \textbf{R}$ matrix are given in \eqref{fgd_stbc_hrqf} and \eqref{fgd_stbc_r} respectively, where $t$ denotes the non zero entries. 

%\begin{figure*}
\scriptsize \addtolength{\arraycolsep}{-1pt}
\begin{equation}
\label{fgd_stbc_hrqf}
\textbf{M} = \left[\begin{array}{ccccccccccccccccc}
t & 0 & 0 & 0 & 0 & 0 & 0 & 0 & 0 & 0 & 0 & 0 & 0 & 0 & 0 & 0 & 0\\
0 & t & 0 & 0 & 0 & 0 & t & t & 0 & 0 & 0 & 0 & t & t & t & t & t\\
0 & 0 & t & 0 & 0 & 0 & t & 0 & t & t & t & 0 & 0 & t & 0 & t & t\\
0 & 0 & 0 & t & 0 & 0 & 0 & t & t & t & 0 & t & t & 0 & 0 & t & t\\
0 & 0 & 0 & 0 & t & 0 & t & t & t & 0 & t & t & 0 & 0 & t & t & 0\\
0 & 0 & 0 & 0 & 0 & t & 0 & 0 & 0 & t & t & t & t & t & t & t & 0\\
0 & t & t & 0 & t & 0 & t & 0 & 0 & t & 0 & t & t & 0 & 0 & t & 0\\
0 & t & 0 & t & t & 0 & 0 & t & 0 & t & t & 0 & 0 & t & 0 & t & 0\\
0 & 0 & t & t & t & 0 & 0 & 0 & t & 0 & 0 & 0 & t & t & t & t & 0\\
0 & 0 & t & t & 0 & t & t & t & 0 & t & 0 & 0 & 0 & 0 & t & t & 0\\
0 & 0 & t & 0 & t & t & 0 & t & 0 & 0 & t & 0 & t & 0 & 0 & t & t\\
0 & 0 & 0 & t & t & t & t & 0 & 0 & 0 & 0 & t & 0 & t & 0 & t & t\\
0 & t & 0 & t & 0 & t & t & 0 & t & 0 & t & 0 & t & 0 & 0 & t & 0\\
0 & t & t & 0 & 0 & t & 0 & t & t & 0 & 0 & t & 0 & t & 0 & t & 0\\
0 & t & 0 & 0 & t & t & 0 & 0 & t & t & 0 & 0 & 0 & 0 & t & t & t\\
0 & t & t & t & t & t & t & t & t & t & t & t & t & t & t & t & t\\
0 & t & t & t & 0 & 0 & 0 & 0 & 0 & 0 & t & t & 0 & 0 & t & t & t\\
\end{array}\right]
\end{equation}
%\hrule
%\end{figure*} 

%\begin{figure*}
\scriptsize
\begin{equation}
\label{fgd_stbc_r}
\textbf{R} = \left[\begin{array}{ccccccccccccccccc}
t & 0 & 0 & 0 & 0 & 0 & 0 & 0 & 0 & 0 & 0 & 0 & 0 & 0 & 0 & 0 & 0\\
0 & t & 0 & 0 & 0 & 0 & t & t & 0 & 0 & 0 & 0 & t & t & t & t & t\\
0 & 0 & t & 0 & 0 & 0 & t & 0 & t & t & t & 0 & 0 & t & 0 & t & t\\
0 & 0 & 0 & t & 0 & 0 & 0 & t & t & t & 0 & t & t & 0 & 0 & t & t\\
0 & 0 & 0 & 0 & t & 0 & t & t & t & 0 & t & t & 0 & 0 & t & t & 0\\
0 & 0 & 0 & 0 & 0 & t & 0 & 0 & 0 & t & t & t & t & t & t & t & 0\\
0 & 0 & 0 & 0 & 0 & 0 & t & t & t & t & t & t & t & t & t & t & t\\
0 & 0 & 0 & 0 & 0 & 0 & 0 & t & t & t & t & t & t & t & t & t & t\\
0 & 0 & 0 & 0 & 0 & 0 & 0 & 0 & t & t & t & t & t & t & t & t & 0\\
0 & 0 & 0 & 0 & 0 & 0 & 0 & 0 & 0 & t & t & t & t & t & t & t & t\\
0 & 0 & 0 & 0 & 0 & 0 & 0 & 0 & 0 & 0 & t & t & t & t & t & t & t\\
0 & 0 & 0 & 0 & 0 & 0 & 0 & 0 & 0 & 0 & 0 & t & t & t & t & t & t\\
0 & 0 & 0 & 0 & 0 & 0 & 0 & 0 & 0 & 0 & 0 & 0 & t & t & t & t & t\\
0 & 0 & 0 & 0 & 0 & 0 & 0 & 0 & 0 & 0 & 0 & 0 & 0 & t & t & t & t\\
0 & 0 & 0 & 0 & 0 & 0 & 0 & 0 & 0 & 0 & 0 & 0 & 0 & 0 & t & t & t\\
0 & 0 & 0 & 0 & 0 & 0 & 0 & 0 & 0 & 0 & 0 & 0 & 0 & 0 & 0 & t & t\\
0 & 0 & 0 & 0 & 0 & 0 & 0 & 0 & 0 & 0 & 0 & 0 & 0 & 0 & 0 & 0 & t\\
\end{array}\right]
\end{equation}
%\hrule
%\end{figure*} 

\end{example}
%%%%

As we have seen from Lemmas \ref{hrqf_multi_group_lemma}, \ref{hrqf_fast_decode_lemma} and \ref{hrqf_fast_group_decode_lemma}, the FSD complexity of the STBC depends only upon the HRQF matrix $ \textbf{M}$ and not on the $ \textbf{H}_{eq}$ matrix, i.e., the FSD  complexity is independent of the channel matrix and the number of receive antennas. It can be completely captured into a single matrix obtained from the set of weight matrices and their ordering.

%%%%%%%%%%%%%%%%%%%%%%%%%%%%%%%%%%%%%%%%%%%%%%%%%
% Algorithm For Ordering the Dispersion Matrices
%%%%%%%%%%%%%%%%%%%%%%%%%%%%%%%%%%%%%%%%%%%%%%%%%

\section{Algorithm for a Best Ordering of the Weight Matrices}
\label{sec4}

As seen in Example \ref{silver_code}, the ordering of weight matrices determines the  FSD complexity of an STBC. We have also seen that the HRQF matrix completely determines the FSD  complexity of an STBC. In this section we present an algorithm that uses the HRQF matrix as an input and manipulates it in order to obtain a best possible ordering of weight matrices. We do so by using row and column permutations of the HRQF matrix. The rows and columns of the HRQF matrix are in one to one correspondence with the ordering of the weight matrices. Hence, if we change the ordering of the weight matrices, the HRQF matrix changes accordingly and vice verse. For example, any transposition in the ordering of the weight matrices will result in swapping the corresponding rows and columns (since HRQF matrix is symmetric) of the HRQF matrix. 
\begin{remark}
\label{hrqf_remark}
Note that we cannot perform such a manipulation on the $\textbf{R}$ matrix since it depends not only on the order of weight matrices but on the channel matrix as well. Also, all the entries of the $\textbf{R}$ matrix do not depict the HR orthogonality of the weight matrices, i.e., the $\left(i, j\right)$-th entry of the matrix may not be zero even if the $i$-th and $j$-th weight matrices are HR orthogonal. Hence, the $\textbf{R}$ matrix needs to be calculated each time the ordering of the weight matrices is changed which is not so in the case of the HRQF matrix.
\end{remark}

The algorithm to get a best possible ordering is given in Algorithm \ref{alg1:alg1} in the next page. 

%%%%%%%%%%%%%%%%%%%%%%%%%%%
{\small
%\begin{figure*}
\begin{table}
\begin{algorithm}[H]
\label{alg1:alg1}
\SetLine
%\linesnumbered
\KwIn{The HRQF matrix - $\textbf{M} = \left( m_{ij}\right)$, the size of the HRQF matrix - $K \times K$, the input ordering - $input\_ordering$}
\KwOut{The best possible FSD complexity - $best\_dec\_cmplxty$ and its corresponding ordering - $best\_ordering$}

- $current\_ordering = input\_ordering$

%- $best\_ordering = input\_ordering$

- $best\_dec\_cmplxty = K$;  $i = 1$

%\If{$K = 1$}
%{
%- return
%}

%% Loop Here!!!!
\Repeat{$i = K$}
{

%\textit{Shift all the zero entries in the first row next to the first element}
- Shift all the zero entries in the first row next to the first element
%\For{$t = 2$ to $K$}
%{
%- Let the next zero entry from cursor ($t$) in the 1st row be in the $j$-th column.
%
%- Swap the $t$-th column, row with the $j$-th column, row.
%
%- Update $current\_ordering$
%}
%

- Let the number of zeros in the first row be $num\_zero\_cols$

- $grp\_size = 1$; $cur\_zero\_col = 2$. (Var under consideration)

\If{$num\_zero\_cols = 0$}
{
- $dec\_cmplxty = K$
}
\Else
{
\Repeat{$cur\_zero\_col \leq grp\_size + num\_zero\_cols$}
{
- $flag = 0$

- Let no. of consecutive zeros in $cur\_zero\_col$ be $n$.

\If{$n < grp\_size$}
{
- Move the $cur\_zero\_col$-th variable to the end (Move the corresponding row and column to the end and moving the rest of the rows and columns upwards).
%- Move the $cur\_zero\_col$-th row to the last row while moving the rest of the rows upwards.
%
%- Move the $cur\_zero\_col$-th column to the end while moving the rest of the columns to the left.

- Update $current\_ordering$.

- $num\_zero\_cols = num\_zero\_cols - 1$
}

\If{$n \geq cur\_zero\_col - 1$}
{
- $grp\_size = cur\_zero\_col - 1$

\textbf{Marker}:

- Find the next zero along the $cur\_zero\_col$ column from the $cur\_zero\_col$ row.

- Let the next zero be found in the $j$-th row.

\For{$t=1$ to $grp\_size$}
{
\If{$m_{jt} \neq 0$}
{
- $flag = 1$
}
}
\If{$flag = 1$}
{
- Move up the $j$-th variable to the $grp\_size+1$-th position.
- Update $current\_ordering$.

- $grp\_size = grp\_size + 1$

- $cur\_zero\_col = cur\_zero\_col + 1$

- jump back to \textbf{Marker}
}

\If{$flag = 0$}
{
- $cur\_zero\_col = cur\_zero\_col + 1$ (No more zeros left in this column to check)
}
}
}

- $top\_hrqf\_matrix$ = upper left $grp\_size \times grp\_size$ matrix. (With ordering $top\_hrqf\_ordering$).
%- Let the upper left $grp\_size \times grp\_size$ matrix be denoted by $top\_hrqf\_matrix$ and the corresponding ordering be $top\_hrqf\_ordering$.

- $bot\_hrqf\_matrix$ = square matrix from row, column = $grp\_size+1$ to row, column = $grp\_size+num\_zero\_cols$. (With ordering $bot\_hrqf\_ordering$).
%- Let the square matrix from row, column = $grp\_size+1$ to row, column = $grp\_size+num\_zero\_cols$ be denoted by $bot\_hrqf\_matrix$ and the corresponding ordering be $bot\_hrqf\_ordering$.

\textit{Run the current algorithm on the top and bottom matrices}

- $\left[ top\_dec\_cmplxity, best\_top\_ordering\right]  = order\_hrqf\left( top\_hrqf\_matrix, top\_hrqf\_size, top\_ordering\right)$

- $\left[ bot\_dec\_cmplxity, best\_bot\_ordering\right]  = order\_hrqf\left( bot\_hrqf\_matrix, bot\_hrqf\_size, bot\_ordering\right)$

- Update $current\_ordering$ with $best\_top\_ordering$ and $best\_bot\_ordering$.

- Number of variables conditioned: $cond\_vars = K - grp\_size - num\_zero\_cols$.

- $dec\_cmplxty = cond\_vars + max\left\lbrace top\_dec\_cmplxity, bot\_dec\_cmplxity\right\rbrace $

}
\If{$best\_dec\_cmplxty > dec\_cmplxty$}
{
- $best\_dec\_cmplxty = dec\_cmplxty$

- $best\_ordering = current\_ordering$
}

- Circularly shift the variables (rows and columns)

- Update $current\_ordering$

- $i = i + 1$

}
%% End loop here

%%%%%%%%%%%%%%%%%%%%%%%%%%%%%%%%%%%%%%%%%%%%%%%%%%%%%%%%%%%%%%%%%%%%%%%%%%%%%%%%%%%%%

\caption{The algorithm to obtain a best ordering of weight matrices - $order\_hrqf$}
\end{algorithm}
\end{table}
}

%%%%%%%%%%%%%%%%%%%
\subsubsection*{An important structural property of the HRQF matrix}
Before we delve into the proof of correctness of the algorithm, let us make a few observations regarding the structure of the HRQF matrix for various scenarios of decoding. 
Let $\mathcal{C}$ be an STBC with weight matrices $ \textbf{A}_{1}, ... , \textbf{A}_{K}$ and corresponding variables $x_{1}, ..., x_{K}$. The code $\mathcal{C}$ can be multi-group decodable, fast-decodable, fast group decodable or none of these. The structure of the upper triangular portion of the HRQF matrix for the first three of these are shown in the equations \eqref{multi_group_r_mat}, \eqref{fast_decodable_r_mat} and \eqref{fast_group_decodable_r_mat} respectively. In case the code does not allow any of these forms of decoding, the ordering of the variables is immaterial. Let $\Delta$ be a best ordering of the variables. Without loss of generality, let the first variable in this ordering be labelled $x_{1}$. Irrespective of the type of decoding provided by the code, we notice that all the variables that need to be jointly decoded with $x_{1}$ are adjacent to it in the HRQF matrix. These are followed by the set of zeros which indicate all the variables which are Hurwitz-Radon orthogonal with the previous set. These are further followed by the variables that are to be conditioned in order to obtain this group-decodable structure. In case of multi-group decodable codes, this is a null set. 

\textit{Proof of Correctness:} 
We need to show that the algorithm produces a best possible ordering for FSD complexity. 
The algorithm uses the structural property of the HRQF matrix for obtaining a best ordering. Given any ordering $\Lambda$, with the first variable as $x_{1}$, the algorithm partitions the variables into three sets - $\Lambda_{1}, \Lambda_{2}$ and $\Lambda_{3}$. In case of multi-group decoding and fast group decoding, $\Lambda_{1}$ contains all the variables that belong to the same group as $x_{1}$. In case of fast decoding, it represents the set of variables that need to be jointly decoded with $x_{1}$ after conditioning. $\Lambda_{2}$ contains all the variables HR orthogonal with $\Lambda_{1}$. $\Lambda_{3}$ contains the variables that need to be conditioned. This will be an empty set in case of multi-group decoding and hence multi-group decoding can be considered as fast decoding with no conditioned variables. The algorithm is recursively run on the sets $\Lambda_{1}$ and $\Lambda_{2}$ to further order the variables (as in case of fast group decoding).

First, we fix the first variable in the given ordering. Let it be $x_{i}$ for some $1 \leq i \leq K$. The algorithm starts with only $x_{i}$ in $\Lambda_{1}$, all variables which are HR orthogonal with $x_{i}$ in a temporary set $\Lambda_{t}$ and the rest of the variables in $\Lambda_{3}$.The variables HR orthogonal with $x_{i}$ can be easily identified as they correspond to the zero entries in the first row. For ease of manipulation, we move all the variables in the set $\Lambda_{t}$ adjacent $x_{i}$ in the ordering. This is equivalent to grouping all the zeros in the first row and placing them adjacent to the (1,1) entry, as these denote all the variables HR orthogonal with $x_{i}$. The current ordering of variables is - $\left\lbrace x_{i}, \Lambda_{t}, \Lambda_{3}\right\rbrace $. For the working of the algorithm, the following set operations are synonymous with the following matrix operations on the HRQF matrix:
\begin{itemize}
\item Moving a variable $x_{j}$ into $\Lambda_{1}$ - Suppose the variable $x_{j}$ is the $p$-th element in the current ordering, we move the $p$-th column to the $\left( \vert \Lambda_{1} \vert + 1\right)$-th column shifting the rest of the columns to the right and then we move the $p$-th row to the $\left( \vert \Lambda_{1} \vert + 1\right)$-th row shifting the rest of the rows downwards. Update the ordering accordingly. 
\item Moving a variable $x_{j}$ into $\Lambda_{3}$ - Suppose the variable $x_{j}$ is the $p$-th element in the current ordering, we move the $p$-th column to the last column shifting the rest of the columns to the left and then we move the $p$-th row to the last row shifting the rest of the rows upwards. Update the ordering accordingly.
\item Moving a variable $x_{j}$ into $\Lambda_{2}$ - No change. Only the cardinalities of $\Lambda_{2}$ and $\Lambda_{t}$ will change accordingly.   
\end{itemize}
The algorithm works in two stages.
\begin{itemize}
\item First, we find the largest $L \leq K$ such that $\vert \Lambda_{1} \vert + \vert \Lambda_{2} \vert = L.$
\item In the second stage, since the variables in $\Lambda_{1}$ and $\Lambda_{2}$ will be decoded separately, we consider the submatrices representing them as HRQF matrices of some STBC and run the first step of the algorithm on them recursively.
\end{itemize}

Let $x_{k}$ be the first variable in $\Lambda_{t}$. It is currently the second variable in the overall ordering. We now proceed to find all the variables that are HR orthogonal with $x_{k}$ and not HR orthogonal with $\Lambda_{1}$, since these will need to be jointly decoded with the variables in $\Lambda_{1}$. This can be found as follows. If any variable is HR orthogonal with $x_{k}$, it will have a zero entry in the column corresponding to $x_{k}$. Hence we traverse down the column represented by $x_{k}$ to find the next zero entry. Let it be found in the $p$-th row corresponding to the variable $x_{l}$. We also need to ensure that this variable is not HR orthogonal with $\Lambda_{1}$. So, we traverse the $p$-th row from column 1 to column $\vert \Lambda_{1} \vert$ and check for any non-zero entries. In case any of them are found, it means that this variable needs to be jointly decoded with $\Lambda_{1}$. We add $x_{l}$ to $\Lambda_{1}$. We now repeat the procedure on the variable $x_{k}$ again until all the variables have been exhausted. Since there are a finite number of variables, this process terminates. Note that it is possible that a few members from $\Lambda_{t}$ may move into $\Lambda_{1}$ during this process. These variables will be accounted for later. We now add $x_{k}$ to $\Lambda_{2}$. We have now managed to create two sets which are HR orthogonal and the current ordering of the variables is $\left\lbrace \Lambda_{1}, \Lambda_{2}, \Lambda_{t}, \Lambda_{3}\right\rbrace $. 

We now proceed to the next variable in $\Lambda_{t}$. Let this be $x_{j}$. We need to ensure that $x_{j}$ is HR orthogonal with $\Lambda_{1}$. This can be easily checked by counting the consecutive number of zeros in its column from the top of the column. If it is equal to $\vert \Lambda_{1} \vert$, then $x_{j}$ is HR orthogonal with $\Lambda_{1}$. We add $x_{j}$ to $\Lambda_{2}$ in this case. Otherwise, we move $x_{j}$ to $\Lambda_{3}$. It may so happen that $x_{j}$ may be HR orthogonal with $\Lambda_{1}$ and the already existing members of $\Lambda_{2}$. This can be checked by counting the consecutive numbers of zeros in its column from the top of the column and matching cardinalities. In such a case, we repeat the procedure of finding any variable that needs to be jointly decoded with $\Lambda_{1}$ on this variable as well. If any such variable exists, we move that variable and the pre-existing variables of $\Lambda_{2}$ into $\Lambda_{1}$. This is needed for fast group decoding scenarios. We continue this procedure till we have exhausted all the variables of $\Lambda_{t}$. We have now produced three sets $\Lambda_{1}$, $\Lambda_{2}$ and $\Lambda_{3}$. $\Lambda_{1}$ and $\Lambda_{2}$ are HR orthogonal w.r.t. one another if we condition them on the variables of $\Lambda_{3}$.

We still need to process $\Lambda_{1}$ and $\Lambda_{2}$ further. We repeat the same procedure on these variables considering them as independent HRQF matrices. If the cardinality of $\Lambda_{1}$ is the same as that of the HRQF matrix row/column size, then we stop this procedure as it cannot be ordered any further and set the FSD  complexity to $\vert \Lambda_{1} \vert$. This process terminates as the sizes of $\Lambda_{1}$ and $\Lambda_{2}$ keep reducing with each iteration and there are a finite number of variables. We calculate the FSD complexity of this ordering as - $\vert \Lambda_{3} \vert + max\left( d_{1}, d_{2} \right) $ where $d_{1}$ and $d_{2}$ are the FSD  complexities of $\Lambda_{1}$ and $\Lambda_{2}$ respectively.

Since we are dealing with fast sphere decoding and grouping the variables as sets $\Lambda_{1}, \Lambda_{2}$ and $\Lambda_{3}$ are the only possible ways to obtain a structure that will allow FS decoding, these are all the orderings we need to consider if we take $x_{i}$ as the first variable. Hence, we have exhausted all the possible orderings which can provide FS decoding with $x_{i}$ as the first variable, and a best possible FSD complexity for this case has been found by the above steps. Now, if we run through all the variables of the STBC making each one of them the first variable in turn and repeating the above procedure, we can find out a best possible FSD complexity for all possible orderings. Since all the variables have been given a chance to be the first variable, we have exhausted all possible orderings that offer FS decoding. And hence, the ordering provided by the algorithm is a best possible ordering for FSD complexity. 
\hfill $\blacksquare$

\begin{remark}
\label{hrqf_alg_remark}
Since the algorithm recursively orders each set, it is capable of even ordering variables in scenarios where any group obtained from conditioning of variables admits conditional decoding. 
\end{remark}

We now illustrate the working of the algorithm with an example. 
\begin{example}
\label{algo_fast_decode}
Consider the Silver code presented in Example \ref{silver_code}. If we order the variables as $\left[s_{1I} , s_{4I} , s_{4Q} , s_{2Q} , s_{3Q} , s_{3I} , s_{2I} , s_{1Q} \right]$, we get the following HRQF matrix and the $\textbf{R}$ matrix for this ordering:  
{\footnotesize
\begin{equation*}
%\label{algo_fast_decode_hrqf1}
\textbf{M} = \left[\begin{array}{cccccccc}
t & t & t & 0 & t & t & 0 & 0\\
t & t & 0 & t & 0 & 0 & t & t\\
t & 0 & t & t & 0 & 0 & t & t\\
0 & t & t & t & t & t & 0 & 0\\
t & 0 & 0 & t & t & 0 & t & t\\
t & 0 & 0 & t & 0 & t & t & t\\
0 & t & t & 0 & t & t & t & 0\\
0 & t & t & 0 & t & t & 0 & t\\
\end{array}\right]; 
\end{equation*}
\begin{equation*}
%\label{algo_multi_group_r1}
\textbf{R} = \left[\begin{array}{cccccccc}
t & t & t & 0 & t & t & 0 & 0\\
0 & t & t & t & t & t & t & t\\
0 & 0 & t & t & t & t & t & t\\
0 & 0 & 0 & t & t & t & t & t\\
0 & 0 & 0 & 0 & t & t & t & t\\
0 & 0 & 0 & 0 & 0 & t & t & t\\
0 & 0 & 0 & 0 & 0 & 0 & t & 0\\
0 & 0 & 0 & 0 & 0 & 0 & 0 & t\\
\end{array}\right].
\end{equation*}
}
The FSD complexity for this ordering is $M^{8}$. And this ordering does not admit fast decoding as well. 

When we run the algorithm on the given HRQF matrix, the two sets $\Lambda_{1}$ and $\Lambda_{2}$ are formed, which are HR orthogonal with each other. In this case, $\Lambda_{2} = \left\lbrace s_{2Q}, s_{2I}, s_{1Q}\right\rbrace $ and $\Lambda_{1} = \left\lbrace s_{1I}\right\rbrace $. The conditioned variables will be present in the set $\Lambda_{3} = \left\lbrace s_{4I}, s_{4Q}, s_{3Q}, s_{3I}\right\rbrace $.  The HRQF matrix at this stage is as given by \eqref{algo_fast_decode_hrqf3}. The ordering of the variables at the end of this stage is $\left[ s_{1I}, s_{2Q}, s_{2I}, s_{1Q}, s_{4I}, s_{4Q}, s_{3Q}, s_{3I}\right] $. Now, the variables from both sets $\Lambda_{1}$ and $\Lambda_{2}$ are run through the algorithm again. So, the top left $1 \times 1$ matrix and the next block diagonal $3 \times 3$ matrix are both fed to the algorithm. Since this is already the best possible ordering of these sets, the matrix $ \textbf{M}$ remains the same after this stage. And the final ordering obtained is $\left[ s_{1I}, s_{2Q}, s_{2I}, s_{1Q}, s_{4I}, s_{4Q}, s_{3Q}, s_{3I}\right] $. The $ \textbf{R}$ matrix for this ordering is given by \eqref{algo_fast_decode_r2}
\begin{equation}
\label{algo_fast_decode_hrqf3}
\textbf{M} = \left[\begin{array}{cccccccc}
t & 0 & 0 & 0 & t & t & t & t\\
0 & t & 0 & 0 & t & t & t & t\\
0 & 0 & t & 0 & t & t & t & t\\
0 & 0 & 0 & t & t & t & t & t\\
t & t & t & t & t & 0 & 0 & 0\\
t & t & t & t & 0 & t & 0 & 0\\
t & t & t & t & 0 & 0 & t & 0\\
t & t & t & t & 0 & 0 & 0 & t\\
\end{array}\right]
\end{equation}
\begin{equation}
\label{algo_fast_decode_r2}
\textbf{R} = \left[\begin{array}{cccccccc}
t & 0 & 0 & 0 & t & t & t & t\\
0 & t & 0 & 0 & t & t & t & t\\
0 & 0 & t & 0 & t & t & t & t\\
0 & 0 & 0 & t & t & t & t & t\\
0 & 0 & 0 & 0 & t & 0 & 0 & 0\\
0 & 0 & 0 & 0 & 0 & t & 0 & 0\\
0 & 0 & 0 & 0 & 0 & 0 & t & 0\\
0 & 0 & 0 & 0 & 0 & 0 & 0 & t\\
\end{array}\right]
\end{equation}
The FSD complexity for this $ \textbf{R}$ matrix is $M^{5}$ which is the best possible complexity for the Silver code. 
\end{example}

%We now consider an example with fast group decoding. 
\begin{example}
\label{algo_multi_group}
Consider the fast group decodable code presented in Example \ref{hrqf_fast_group_decode_ex}. If we order the variables as $\left[ s_{2}, s_{3}, ..., s_{10}, s_{1}, s_{11}, s_{12}, ... , s_{17}\right] $, we get the following HRQF matrix and the $\textbf{R}$ matrix: 

{\scriptsize \addtolength{\arraycolsep}{-1pt}
\begin{equation*}
%\label{algo_multi_group_hrqf1}
\textbf{M} = \left[\begin{array}{ccccccccccccccccc}
t & 0 & 0 & 0 & 0 & t & t & 0 & 0 & 0 & 0 & 0 & t & t & t & t & t\\
0 & t & 0 & 0 & 0 & t & 0 & t & t & 0 & t & 0 & 0 & t & 0 & t & t\\
0 & 0 & t & 0 & 0 & 0 & t & t & t & 0 & 0 & t & t & 0 & 0 & t & t\\
0 & 0 & 0 & t & 0 & t & t & t & 0 & 0 & t & t & 0 & 0 & t & t & 0\\
0 & 0 & 0 & 0 & t & 0 & 0 & 0 & t & 0 & t & t & t & t & t & t & 0\\
t & t & 0 & t & 0 & t & 0 & 0 & t & 0 & 0 & t & t & 0 & 0 & t & 0\\
t & 0 & t & t & 0 & 0 & t & 0 & t & 0 & t & 0 & 0 & t & 0 & t & 0\\
0 & t & t & t & 0 & 0 & 0 & t & 0 & 0 & 0 & 0 & t & t & t & t & 0\\
0 & t & t & 0 & t & t & t & 0 & t & 0 & 0 & 0 & 0 & 0 & t & t & 0\\
0 & 0 & 0 & 0 & 0 & 0 & 0 & 0 & 0 & t & 0 & 0 & 0 & 0 & 0 & 0 & 0\\
0 & t & 0 & t & t & 0 & t & 0 & 0 & 0 & t & 0 & t & 0 & 0 & t & t\\
0 & 0 & t & t & t & t & 0 & 0 & 0 & 0 & 0 & t & 0 & t & 0 & t & t\\
t & 0 & t & 0 & t & t & 0 & t & 0 & 0 & t & 0 & t & 0 & 0 & t & 0\\
t & t & 0 & 0 & t & 0 & t & t & 0 & 0 & 0 & t & 0 & t & 0 & t & 0\\
t & 0 & 0 & t & t & 0 & 0 & t & t & 0 & 0 & 0 & 0 & 0 & t & t & t\\
t & t & t & t & t & t & t & t & t & 0 & t & t & t & t & t & t & t\\
t & t & t & 0 & 0 & 0 & 0 & 0 & 0 & 0 & t & t & 0 & 0 & t & t & t\\
\end{array}\right], 
\end{equation*}
%The $\textbf{R}$ matrix for this ordering is of the form - 
\begin{equation*}
%\label{algo_multi_group_r1}
\textbf{R} = \left[\begin{array}{ccccccccccccccccc}
t & 0 & 0 & 0 & 0 & t & t & 0 & 0 & 0 & 0 & 0 & t & t & t & t & t\\
0 & t & 0 & 0 & 0 & t & 0 & t & t & 0 & t & 0 & 0 & t & 0 & t & t\\
0 & 0 & t & 0 & 0 & 0 & t & t & t & 0 & 0 & t & t & 0 & 0 & t & t\\
0 & 0 & 0 & t & 0 & t & t & t & 0 & 0 & t & t & 0 & 0 & t & t & 0\\
0 & 0 & 0 & 0 & t & 0 & 0 & 0 & t & 0 & t & t & t & t & t & t & 0\\
0 & 0 & 0 & 0 & 0 & t & t & t & t & 0 & t & t & t & t & t & t & t\\
0 & 0 & 0 & 0 & 0 & 0 & t & t & t & 0 & t & t & t & t & t & t & t\\
0 & 0 & 0 & 0 & 0 & 0 & 0 & t & t & 0 & t & t & t & t & t & t & 0\\
0 & 0 & 0 & 0 & 0 & 0 & 0 & 0 & t & 0 & t & t & t & t & t & t & t\\
0 & 0 & 0 & 0 & 0 & 0 & 0 & 0 & 0 & t & 0 & 0 & 0 & 0 & 0 & 0 & 0\\
0 & 0 & 0 & 0 & 0 & 0 & 0 & 0 & 0 & 0 & t & t & t & t & t & t & t\\
0 & 0 & 0 & 0 & 0 & 0 & 0 & 0 & 0 & 0 & 0 & t & t & t & t & t & t\\
0 & 0 & 0 & 0 & 0 & 0 & 0 & 0 & 0 & 0 & 0 & 0 & t & t & t & t & t\\
0 & 0 & 0 & 0 & 0 & 0 & 0 & 0 & 0 & 0 & 0 & 0 & 0 & t & t & t & t\\
0 & 0 & 0 & 0 & 0 & 0 & 0 & 0 & 0 & 0 & 0 & 0 & 0 & 0 & t & t & t\\
0 & 0 & 0 & 0 & 0 & 0 & 0 & 0 & 0 & 0 & 0 & 0 & 0 & 0 & 0 & t & t\\
0 & 0 & 0 & 0 & 0 & 0 & 0 & 0 & 0 & 0 & 0 & 0 & 0 & 0 & 0 & 0 & t\\
\end{array}\right].
\end{equation*}
}
The FSD complexity for this $ \textbf{R}$ matrix is $M^{13}$ but the best possible FSD complexity is $M^{12}$. 
When we run the algorithm on the given HRQF matrix, the two sets $\Lambda_{1}$ and $\Lambda_{2}$ are formed, which are HR orthogonal with each other. In this case, $\Lambda_{1} = \left\lbrace s_{2}, s_{3}, s_{4}, s_{5}, s_{6}, s_{9}, s_{10}, s_{11}, s_{12}, s_{7}, s_{8}, s_{13}, s_{14}, s_{15}, s_{16}\right\rbrace $ and $\Lambda_{2} = \left\lbrace s_{1}\right\rbrace $. The set $\Lambda_{3}$ is empty as this provides a group decoding scenario. Now, the variables from both sets are run through the algorithm again. So, the top left $16 \times 16$ matrix and the bottom right $1 \times 1$ matrix are both fed to the algorithm. The top left matrix is ordered according to the fast decoding algorithm as presented in the previous example. Since the bottom right matrix is a $1 \times 1$ matrix, it is returned without change. The final ordering of variables obtained is $\left[ s_{2}, s_{3}, s_{4}, s_{5}, s_{6}, s_{7}, s_{8}, s_{13}, s_{14}, s_{15}, s_{16}, s_{9}, s_{10}, s_{11}, s_{12}, s_{1}\right] $. The $\textbf{M}$ and the $\textbf{R}$ matrix for this ordering is as shown below.

{\scriptsize \addtolength{\arraycolsep}{-1pt}
\begin{equation*}
%\label{algo_multi_group_hrqf2}
\textbf{M} = \left[\begin{array}{ccccccccccccccccc}
t & 0 & 0 & 0 & 0 & t & t & t & t & t & t & t & 0 & 0 & 0 & 0 & 0\\
0 & t & 0 & 0 & 0 & t & 0 & 0 & t & 0 & t & t & t & t & t & 0 & 0\\
0 & 0 & t & 0 & 0 & 0 & t & t & 0 & 0 & t & t & t & t & 0 & t & 0\\
0 & 0 & 0 & t & 0 & t & t & 0 & 0 & t & t & 0 & t & 0 & t & t & 0\\
0 & 0 & 0 & 0 & t & 0 & 0 & t & t & t & t & 0 & 0 & t & t & t & 0\\
t & t & 0 & t & 0 & t & 0 & t & 0 & 0 & t & 0 & 0 & t & 0 & t & 0\\
t & 0 & t & t & 0 & 0 & t & 0 & t & 0 & t & 0 & 0 & t & t & 0 & 0\\
t & 0 & t & 0 & t & t & 0 & t & 0 & 0 & t & 0 & t & 0 & t & 0 & 0\\
t & t & 0 & 0 & t & 0 & t & 0 & t & 0 & t & 0 & t & 0 & 0 & t & 0\\
t & 0 & 0 & t & t & 0 & 0 & 0 & 0 & t & t & t & t & t & 0 & 0 & 0\\
t & t & t & t & t & t & t & t & t & t & t & t & t & t & t & t & 0\\
t & t & t & 0 & 0 & 0 & 0 & 0 & 0 & t & t & t & 0 & 0 & t & t & 0\\
0 & t & t & t & 0 & 0 & 0 & t & t & t & t & 0 & t & 0 & 0 & 0 & 0\\
0 & t & t & 0 & t & t & t & 0 & 0 & t & t & 0 & 0 & t & 0 & 0 & 0\\
0 & t & 0 & t & t & 0 & t & t & 0 & 0 & t & t & 0 & 0 & t & 0 & 0\\
0 & 0 & t & t & t & t & 0 & 0 & t & 0 & t & t & 0 & 0 & 0 & t & 0\\
0 & 0 & 0 & 0 & 0 & 0 & 0 & 0 & 0 & 0 & 0 & 0 & 0 & 0 & 0 & 0 & 0\\
\end{array}\right], 
\end{equation*}
\begin{equation*}
%\label{algo_multi_group_r1}
\textbf{R} = \left[\begin{array}{ccccccccccccccccc}
t & 0 & 0 & 0 & 0 & t & t & t & t & t & t & t & 0 & 0 & 0 & 0 & 0\\
0 & t & 0 & 0 & 0 & t & 0 & 0 & t & 0 & t & t & t & t & t & 0 & 0\\
0 & 0 & t & 0 & 0 & 0 & t & t & 0 & 0 & t & t & t & t & 0 & t & 0\\
0 & 0 & 0 & t & 0 & t & t & 0 & 0 & t & t & 0 & t & 0 & t & t & 0\\
0 & 0 & 0 & 0 & t & 0 & 0 & t & t & t & t & 0 & 0 & t & t & t & 0\\
0 & 0 & 0 & 0 & 0 & t & 0 & t & 0 & 0 & t & 0 & 0 & t & 0 & t & 0\\
0 & 0 & 0 & 0 & 0 & 0 & t & 0 & t & 0 & t & 0 & 0 & t & t & 0 & 0\\
0 & 0 & 0 & 0 & 0 & 0 & 0 & t & 0 & 0 & t & 0 & t & 0 & t & 0 & 0\\
0 & 0 & 0 & 0 & 0 & 0 & 0 & 0 & t & 0 & t & 0 & t & 0 & 0 & t & 0\\
0 & 0 & 0 & 0 & 0 & 0 & 0 & 0 & 0 & t & t & t & t & t & 0 & 0 & 0\\
0 & 0 & 0 & 0 & 0 & 0 & 0 & 0 & 0 & 0 & t & t & t & t & t & t & 0\\
0 & 0 & 0 & 0 & 0 & 0 & 0 & 0 & 0 & 0 & 0 & t & 0 & 0 & t & t & 0\\
0 & 0 & 0 & 0 & 0 & 0 & 0 & 0 & 0 & 0 & 0 & 0 & t & 0 & 0 & 0 & 0\\
0 & 0 & 0 & 0 & 0 & 0 & 0 & 0 & 0 & 0 & 0 & 0 & 0 & t & 0 & 0 & 0\\
0 & 0 & 0 & 0 & 0 & 0 & 0 & 0 & 0 & 0 & 0 & 0 & 0 & 0 & t & 0 & 0\\
0 & 0 & 0 & 0 & 0 & 0 & 0 & 0 & 0 & 0 & 0 & 0 & 0 & 0 & 0 & t & 0\\
0 & 0 & 0 & 0 & 0 & 0 & 0 & 0 & 0 & 0 & 0 & 0 & 0 & 0 & 0 & 0 & 0\\
\end{array}\right]. 
\end{equation*}
}
This ordering gives us the FSD complexity of $M^{12}$. 
\end{example}
%%%%%%%%%%%%%%%%%%%%%%%%%%%%%%%%%%%%%%%%%%%%%%%%%%%%

\section{Conclusion}
\label{sec5}

In this paper we have analysed the FSD complexity of an STBC using quadratic forms. We have shown that the HRQF completely categorizes the FSD complexity of an STBC and hence it is independent of the channel and the number of receive antennas. 
We have provided an algorithm to obtain a best ordering of weight matrices to get the best decoding performance from the code. 
%%%%%%%%%%%%%%%%%%%%%%%%%%%%%%%%%%%%%%%%%%%%%

\section*{Acknowledgements}
This work was supported partly by the DRDO-IISc program on Advanced Research in Mathematical Engineering through a research grant, and partly by the INAE Chair Professorship grant to B. S. Rajan.

\end{document}